\newcommand{\lyxaddress}[1]{
\par {\raggedright #1
\vspace{1.4em}
\noindent\par}
}
 \theoremstyle{definition}
  \newtheorem{example}{\protect\examplename}
  \theoremstyle{plain}
  \newtheorem{lem}{\protect\lemmaname}
\theoremstyle{plain}
\newtheorem{thm}{\protect\theoremname}
  \theoremstyle{plain}
  \newtheorem{cor}{\protect\corollaryname}
  \theoremstyle{remark}
  \newtheorem{rem}{\protect\remarkname}
\DeclareMathOperator{\convex}{co}
\DeclareMathOperator{\extreme}{ext}
\DeclareMathOperator{\pri}{prior}
\DeclareMathOperator{\exaend}{\blacktriangle}
  \providecommand{\examplename}{Example}
  \providecommand{\lemmaname}{Lemma}
  \providecommand{\remarkname}{Remark}
\providecommand{\corollaryname}{Corollary}
\providecommand{\theoremname}{Theorem}
\begin{document}

\title{Resolving conflicts between statistical methods by probability combination: Application to empirical Bayes analyses of genomic data}

\maketitle
~\\
David R. Bickel

\lyxaddress{Ottawa Institute of Systems Biology\\
Department of Biochemistry, Microbiology, and Immunology\\
Department of Mathematics and Statistics\\
University of Ottawa; 451 Smyth Road; Ottawa, Ontario, K1H 8M5}
\begin{abstract}
In the typical analysis of a data set, a single method is selected
for statistical reporting even when equally applicable methods yield
very different results. Examples of equally applicable methods can
correspond to those of different ancillary statistics in frequentist
inference and of different prior distributions in Bayesian inference.
More broadly, choices are made between parametric and nonparametric
methods and between frequentist and Bayesian methods. Rather than
choosing a single method, it can be safer, in a game-theoretic sense,
to combine those that are equally appropriate in light of the available
information. Since methods of combining subjectively assessed probability
distributions are not objective enough for that purpose, this paper
introduces a method of distribution combination that does not require
any assignment of distribution weights. It does so by formalizing
a hedging strategy in terms of a game between three players: nature,
a statistician combining distributions, and a statistician refusing
to combine distributions. The optimal move of the first statistician
reduces to the solution of a simpler problem of selecting an estimating
distribution that minimizes the Kullback-Leibler loss maximized over
the plausible distributions to be combined. The resulting combined
distribution is a linear combination of the most extreme of the distributions
to be combined that are scientifically plausible. The optimal weights are close enough to each other that no extreme
distribution dominates the others. The new methodology is illustrated by combining conflicting empirical
Bayes methodologies in the context of gene expression data analysis.

\end{abstract}
\textbf{Keywords:} ancillarity; conditional inference; combining probabilities;
combining probability distributions; combining tests in parallel;
confidence distribution; confidence posterior; cross entropy; game
theory; imprecise probability; inferential gain; Kullback-Leibler
information; Kullback-Leibler divergence; large-scale simultaneous
inference; linear opinion pool; minimax redundancy; multiple hypothesis
testing; multiple comparison procedure; observed confidence level;
redundancy-capacity theorem\newpage{}

\section{\label{sec:Introduction}Introduction}

The analysis of biological data often requires choices between methods
that seem equally applicable and yet that can yield very different
results. This occurs not only with the notorious problems in frequentist
statistics of conditioning on one of multiple ancillary statistics
and in Bayesian statistics of selecting one of many appropriate priors,
but also in choices between frequentist and Bayesian methods, in whether
to use a potentially powerful parametric test to analyze a small sample
of unknown distribution, in whether and how to adjust for multiple
testing, and in whether to use a frequentist model averaging procedure.
Today, statisticians simultaneously testing thousands of hypotheses
must often decide whether to apply a multiple comparisons procedure
using the assumption that the p-value is uniform under the null hypothesis
(theoretical null distribution) or a null distribution estimated from
the data (empirical null distribution). While the empirical null reduces
estimation bias in many situations \citep{RefWorks:208}, it also
increases variance \citep{efron_large-scale_2010} and substantially
increases bias when the data distributions have heavy tails \citep{conditional2009}.
Without any strong indication of which method can be expected to perform
better for a particular data set, combining their estimated false
discovery rates or adjusted p-values may be the safest approach.

Emphasizing the reference class problem, \citet{DiversityEvidence1995}
pointed out the need for ways to assess the evidence in the diversity
of statistical inferences that can be drawn from the same data. Previous
applications of p-value combination have included combining inferences
from different ancillary statistics \citep{ISI:A1984TE72700008},
combining inferences from more robust procedures with those from procedures
with stronger assumptions, and combining inferences from different
alternative distributions \citep{RefWorks:1016}. However, those combination
procedures are only justified by a heuristic Bayesian argument and
have not been widely adopted. To offer a viable alternative, the problem
of combining conflicting methods is framed herein in terms of probability
combination. 

Most existing methods of automatically combining probability distributions
have been designed for the integration of expert opinions. For example,
\citet{Toda1956}, \citet{Abbas2009}, and \citet{Kracik2011} proposed
combining distributions to minimize a weighted sum of Kullback-Leibler
divergences from the distributions being combined, with the weights
determined subjectively, e.g., by the elicitation of the opinions
of the experts who provided the distributions or by the extent to
which each expert is considered credible. Under broad conditions,
that approach leads to the linear combination of the distributions
that is defined by those weights \citep{Toda1956,Kracik2011}. 

{}``Linear opinion pools'' also result from this \emph{marginalization
property}: any linearly combined marginal distribution is the same
whether marginalization or combination is carried out first \citep{Mcconway1981}.
The marginalization property forbids certain counterintuitive combinations
of distributions, including any combination of distributions that
differs in a probability assignment from the unanimous assignment
of all distributions combined \citep[p. 173]{cooke_experts_1995}.
Combinations violating the marginalization property can be expected
to perform poorly as estimators regardless of their appeal as distributions
of belief. On the other hand, invariance to reversing the order of
Bayesian updating and distribution combination instead requires a
{}``logarithmic opinion pool,'' which uses a geometric mean in place
the arithmetic mean of the linear opinion pool; see, e.g., \citet[§4.11.1]{RefWorks:179}
or \citet{RefWorks:521}. While that property is preferable to the
marginalization property from the point of view of a Bayesian agent
making decisions on the basis of independent reports of other Bayesian
agents, it is less suitable for combining distributions that are highly
dependent or that are distribution estimates rather than actual distributions
of belief. \citet{Genest1986} and \citet[Ch. 11]{cooke_experts_1995}
review these and related issues. 

Like those methods, the strategy introduced in this paper is intended
for combining distributions based on the same data or information
as opposed to combining distributions based on independent data sets.
However, to relax the requirement that the distributions be provided
by experts, the weights are optimized rather than specified. While
the new strategy leads to a linear combination of distributions, the
combination hedges by including only the most extreme distributions
rather than all of the distributions. In addition, the game leading
to the hedging takes into account any known constraints on the true
distribution. See Remark \ref{rem:game-theory} on the pivotal role
game theory played in the foundations of statistics.

The game that generates the hedging strategy is played between three
players: the mechanism that generates the true distribution ({}``Nature''),
a statistician who never combines distributions ({}``Chooser''),
and a statistician open to combining distributions ({}``Combiner'').
Nature must select a distribution that complies with constraints known
to the statisticians, who want to choose distributions as close as
possible to the distribution chosen by Nature. Other things being
equal, each statistician would also like to select a distribution
that is as much better than that of the other statistician as possible.
Thus, each statistician seeks primarily to come close to the truth
and secondarily to improve upon the distribution selected by the other
statistician. Combiner has the advantage over Chooser that the former
may select any distribution, whereas the latter must select one from
a given set of the distributions that estimate the true distribution
or that encode expert opinion. On the other hand, Combiner is disadvantaged
in that Nature seeks to maximize the gain of Chooser without concern
for the gain of Combiner. Since Nature favors Chooser without opposing
Combiner, the optimal strategy of Combiner is one of hedging but is
less cautious than the minimax strategies that are often optimal for
typical two-player zero-sum games against Nature. The distribution
chosen according to the strategy of Combiner will be considered the
combination of the distributions available to Chooser. The combination
distribution is a function not only of the combining distributions
but also of the constraints on the true distribution.

Sec. \ref{sec:Framework-for-combining} encodes the game and strategy
described above in terms of Kullback-Leibler loss and presents its
optimal solution as a general method of combining distributions. The
important special case of combining probabilities is then worked out.
A framework for using the proposed combination method to resolve
method conflicts in point and interval estimation, hypothesis testing,
and other aspects of statistical data analysis will be presented in
Sec. \ref{sec:Statistical-inference}. The framework is illustrated
by applying it to the combination of three false discovery rate methods
for the analysis of microarray data in Sec. \ref{sec:Large-scale-case-study}.
Finally, Appendices A and B collect miscellaneous remarks and proofs,
respectively.

\section{\label{sec:Framework-for-combining}Framework for combining distributions}

\subsection{\label{sub:Information-theory}Information-theoretic background}

Let $\mathcal{P}$ denote the set of probability distributions on
a Borel space $\left(\Xi,\mathcal{B}\left(\Xi\right)\right)$, where
$\mathcal{B}\left(\Xi\right)$ is the set of all Borel subsets of
$\Xi$. The \emph{information divergence} \emph{of $P\in\mathcal{P}$
with respect to $Q\in\mathcal{P}$} is defined as 
\begin{equation}
D\left(P||Q\right)=\int dP\left(\xi\right)\log\frac{dP\left(\xi\right)}{dQ\left(\xi\right)},\label{eq:cross-entropy}
\end{equation}
where $dP$ and $dQ$ are probability density functions of $P$ and
$Q$ in the sense of Radon-Nikodym differentiation with respect to
the same dominating measure \citep{Haussler1997b}. The integrand
follows the $0\log\left(0\right)=0$ and $0\log\left(0/0\right)=0$
conventions. $D\left(P||Q\right)$ is also known as {}``information
for discrimination,'' {}``Kullback-Leibler information,'' {}``Kullback-Leibler
divergence,'' and {}``cross entropy.'' Calling $D\left(P||Q\right)$
{}``information divergence'' emphasizes its interpretation as the
amount of information that would be gained by replacing any distribution
$Q$ with the true distribution $P$. That interpretation accords
with calling 
\begin{equation}
D\left(P^{\prime}||P^{\prime\prime}\rightsquigarrow Q\right)=D\left(P^{\prime}||P^{\prime\prime}\right)-D\left(P^{\prime}||Q\right)\label{eq:information-gain}
\end{equation}
the \emph{information gain} \citep{Phuber77}, the amount of information
gained by using $Q$ rather than $P^{\prime\prime}\in\mathcal{P}$
when the true distribution is $P^{\prime}\in\mathcal{P}$ \citep[cf.][]{Topsoe2007b}. 

For any real parameter set $\Phi$ and family $\mathcal{P}^{\star}=\left\{ P_{\phi}:\phi\in\Phi\right\} $
of probability distributions such that $\mathcal{P}^{\star}\subseteq\mathcal{P}$,
the distribution 
\begin{equation}
\widetilde{\mathcal{P}^{\star}}=\arg\inf_{Q\in\mathcal{P}}\sup_{P^{\star}\in\mathcal{\mathcal{P}^{\star}}}D\left(P^{\star}||Q\right)\label{eq:centroid}
\end{equation}
is called the \emph{centroid} of $\mathcal{P}^{\star}$ \citep[p. 131]{Csiszar2011}.
Let $\mathcal{W}$ denote the set of all measures on the Borel space
$\left(\Phi,\mathcal{B}\left(\Phi\right)\right)$. Reserving the term
\emph{prior} for Sec. \ref{sec:Statistical-inference}, members of
$\mathcal{W}$ will be called \emph{weighting distributions}. Then
$P^{W}=E_{W}\mathcal{P}^{\star}=\int P_{\phi}dW\left(\phi\right)$
defines the \emph{mixture distribution} \emph{of }$\mathcal{P}^{\star}$
\emph{with respect to} \emph{some} $W\in\mathcal{W}$, and
\begin{equation}
W_{\mathcal{P}^{\star}}=\arg\sup_{W\in\mathcal{W}}\int D\left(P_{\phi}||E_{W}\mathcal{P}^{\star}\right)dW\left(\phi\right)\label{eq:weighting-distribution}
\end{equation}
defines the \emph{weighting distribution induced by }$\mathcal{P}^{\star}$. 
\begin{example}
In the case of a family of $\nu$ distributions, the parameter set
can be written as $\Phi=\left\{ \phi_{1},\dots,\phi_{\nu}\right\} $
and the weighting distribution as 
\[
\left\langle W_{\mathcal{P}^{\star}}\left(\phi_{1}\right),\dots,W_{\mathcal{P}^{\star}}\left(\phi_{\nu}\right)\right\rangle =\arg\sup_{w_{1}\in\left[0,1\right],\dots,w_{\nu}\in\left[0,1\right]}\sum_{i=1,\dots,\nu}w_{i}D\left(P_{\phi}||\sum_{j=1,\dots,\nu}w_{j}P_{\phi_{j}}\right),
\]
where the supremum is that of the set of weight $\nu$-tuples constrained
by $\sum_{i=1}^{\nu}w_{i}=1$. \citet{Shulman2004} proved that,
for all $i=1,\dots,\nu$, 
\begin{equation}
W_{\mathcal{P}^{\star}}\left(\phi_{i}\right)\le1-e^{-1}\doteq63\%.\label{eq:weight-bound}
\end{equation}
$\exaend$
\end{example}
The next known result will prove useful in determining the optimal
move in the game of combining distributions that was mentioned in
Sec. \ref{sec:Introduction}.
\begin{lem}
\label{lem:minimax-redundancy}The centroid of any nonempty $\mathcal{P}^{\star}\subseteq\mathcal{P}$
is $\widetilde{\mathcal{P}^{\star}}=P^{W_{\mathcal{P}^{\star}}}$,
where $W_{\mathcal{P}^{\star}}$ is the weighting distribution induced
by $\mathcal{P}^{\star}$.\end{lem}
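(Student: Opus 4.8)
The plan is to recognize this statement as the minimax redundancy (redundancy-capacity) theorem and to prove it by convex duality anchored on a single decomposition of relative entropy. Abbreviate the \emph{minimax redundancy} by $\bar R=\inf_{Q\in\mathcal{P}}\sup_{P^{\star}\in\mathcal{P}^{\star}}D\left(P^{\star}||Q\right)$, so that by \eqref{eq:centroid} the centroid $\widetilde{\mathcal{P}^{\star}}$ is any $Q$ attaining $\bar R$, and the \emph{capacity} by $C=\sup_{W\in\mathcal{W}}\int D\left(P_{\phi}||E_{W}\mathcal{P}^{\star}\right)dW\left(\phi\right)$, so that by \eqref{eq:weighting-distribution} $W_{\mathcal{P}^{\star}}$ is any $W$ attaining $C$; here, as in the Example, $\mathcal{W}$ is understood to be the set of probability measures on $\Phi$. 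It suffices to show $\bar R=C$ and that $P^{W_{\mathcal{P}^{\star}}}$ attains $\bar R$.

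The first step is the compensation identity \citep{Topsoe2007b}: for every $W\in\mathcal{W}$ and $Q\in\mathcal{P}$,
\[
\int D\left(P_{\phi}||Q\right)dW\left(\phi\right)=\int D\left(P_{\phi}||P^{W}\right)dW\left(\phi\right)+D\left(P^{W}||Q\right),
\]
obtained by writing $\log\left(dP_{\phi}/dQ\right)=\log\left(dP_{\phi}/dP^{W}\right)+\log\left(dP^{W}/dQ\right)$ under the integral, integrating the second summand first against $dP_{\phi}$ and then against $dW$, and using $\int P_{\phi}\,dW=P^{W}$. Setting $Q=P^{W}$ identifies the functional maximized in \eqref{eq:weighting-distribution} as $g(W):=\int D\left(P_{\phi}||P^{W}\right)dW\left(\phi\right)$; and since $D\left(P^{W}||Q\right)\ge 0$ and $\int D\left(P_{\phi}||Q\right)dW\left(\phi\right)\le\sup_{P^{\star}\in\mathcal{P}^{\star}}D\left(P^{\star}||Q\right)$, the identity gives $g(W)\le\sup_{P^{\star}}D\left(P^{\star}||Q\right)$ for all $W$ and $Q$, hence the weak-duality bound $C\le\bar R$ after taking the supremum over $W$ and the infimum over $Q$.

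The decisive and most delicate step is the matching bound $\sup_{P^{\star}\in\mathcal{P}^{\star}}D\left(P^{\star}||P^{W_{\mathcal{P}^{\star}}}\right)\le C$, which I would derive from the first-order optimality condition of the concave program $\sup_{W\in\mathcal{W}}g(W)$. Writing $W^{\star}=W_{\mathcal{P}^{\star}}$ and perturbing toward an arbitrary point mass $\delta_{\phi}$ along $W_{\epsilon}=(1-\epsilon)W^{\star}+\epsilon\delta_{\phi}$, one has $P^{W_{\epsilon}}=(1-\epsilon)P^{W^{\star}}+\epsilon P_{\phi}$, and expanding $\int D\left(P_{\phi'}||P^{W_{\epsilon}}\right)dW^{\star}\left(\phi'\right)$ via the compensation identity gives
\[
\frac{g(W_{\epsilon})-g(W^{\star})}{\epsilon}=-g(W^{\star})+\frac{(1-\epsilon)\,D\left(P^{W^{\star}}||P^{W_{\epsilon}}\right)}{\epsilon}+D\left(P_{\phi}||P^{W_{\epsilon}}\right).
\]
Optimality of $W^{\star}$ forces the left-hand side to be $\le 0$; letting $\epsilon\downarrow0$, using $D\left(P^{W^{\star}}||P^{W_{\epsilon}}\right)=o(\epsilon)$ (its right derivative at $\epsilon=0$ vanishes, since $\frac{d}{d\epsilon}\int dP^{W^{\star}}\log\bigl(dP^{W^{\star}}/dP^{W_{\epsilon}}\bigr)$ integrates to $\int dP^{W^{\star}}-\int dP_{\phi}=0$ at $\epsilon=0$) and the lower semicontinuity bound $\liminf_{\epsilon\downarrow0}D\left(P_{\phi}||P^{W_{\epsilon}}\right)\ge D\left(P_{\phi}||P^{W^{\star}}\right)$, one obtains $D\left(P_{\phi}||P^{W^{\star}}\right)\le g(W^{\star})=C$ for every $\phi\in\Phi$, hence $\sup_{P^{\star}}D\left(P^{\star}||P^{W^{\star}}\right)\le C$. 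Together with $C\le\bar R$ and the trivial $\bar R\le\sup_{P^{\star}}D\left(P^{\star}||P^{W^{\star}}\right)$, this forces equality throughout, so $P^{W_{\mathcal{P}^{\star}}}$ attains the minimax redundancy and $\widetilde{\mathcal{P}^{\star}}=P^{W_{\mathcal{P}^{\star}}}$.

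I expect the optimality-condition computation in the third paragraph to be the main obstacle, since everything else is either a one-line identity or standard. The remaining technicalities I would still need to dispatch are: existence of the maximizer $W_{\mathcal{P}^{\star}}$ and finiteness of $C$ (immediate when $\mathcal{P}^{\star}$ is finite, since \eqref{eq:weighting-distribution} is then a continuous concave maximization over a simplex, but needing compactness and upper-semicontinuity arguments in general, for which I would invoke \citet{Haussler1997b} or \citet[p.~131]{Csiszar2011}); justifying the interchange of limit and integral when forming the right derivative in $\epsilon$, handled by monotone/dominated convergence because the relevant integrands are monotone or dominated near $\epsilon=0$; uniqueness of the centroid, which follows from strict convexity of $Q\mapsto D\left(P^{\star}||Q\right)$ on its effective domain; and the degenerate case $\bar R=\infty$, in which the centroid is not defined and nothing is asserted.
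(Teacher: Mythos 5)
Your argument is sound, but note that the paper itself does not prove this lemma at all: its ``proof'' consists of deferring to \citet{Haussler1997b} and \citet{Gruenwald20041367}, the first of which establishes the result by essentially the duality you reconstruct and the second by a general minimax theorem for entropy functionals. What you have written is therefore a self-contained version of the classical redundancy--capacity proof, and it is correct modulo the regularity issues you explicitly flag. Two small observations. First, your reading of $\mathcal{W}$ as the probability measures on $\Phi$ (rather than all measures, as the paper literally says) is not optional but necessary for $P^{W}$ to lie in $\mathcal{P}$, and is consistent with the normalization in the paper's Example; good catch. Second, the $o(\epsilon)$ estimate for $D\left(P^{W^{\star}}||P^{W_{\epsilon}}\right)$ is dispensable: since that term is nonnegative, the optimality inequality $g(W_{\epsilon})-g(W^{\star})\le0$ already gives $\epsilon\,D\left(P_{\phi}||P^{W_{\epsilon}}\right)\le\epsilon\,g(W^{\star})-(1-\epsilon)D\left(P^{W^{\star}}||P^{W_{\epsilon}}\right)\le\epsilon\,g(W^{\star})$, hence $D\left(P_{\phi}||P^{W_{\epsilon}}\right)\le g(W^{\star})$ for every $\epsilon\in(0,1)$, and lower semicontinuity finishes the step without any differentiation under the integral sign. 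As for the outstanding technicalities, the existence of the maximizer $W_{\mathcal{P}^{\star}}$ is implicitly presupposed by the lemma's own statement (it asserts an identity involving $W_{\mathcal{P}^{\star}}$), so you may legitimately take it as hypothesis rather than prove it; the finite case you mention covers every application made of the lemma later in the paper (Theorem \ref{thm:extreme}, Corollaries \ref{cor:combining-probabilities} and \ref{cor:discrete}). What your route buys over the paper's citation is transparency about exactly which hypotheses are used; what it costs is that the general-$\mathcal{P}^{\star}$ case (attainment of the infimum when no maximizing $W$ exists, uniqueness via the parallelogram-type identity) still requires the machinery of \citet{Haussler1997b}.
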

\begin{proof}
Two different proofs appear in \citet{Haussler1997b} and in \citet{Gruenwald20041367}.
For some history of this result, see Remark \ref{rem:minimax-redundancy}.
\end{proof}

\subsection{\label{sub:Distribution-combination-game}Distribution-combination
game}

The game sketched in Sec. \ref{sec:Introduction} will now be specified
in the above notation. Two sets constrain moves in the game: the \emph{plausible
set} $\dot{\mathcal{P}}$ is the subset of $\mathcal{P}$ consisting
of given \emph{plausible distributions}, and $\ddot{\mathcal{P}}\subseteq\mathcal{P}$
consists of given \emph{combining distributions}. The move of Nature
is a distribution $\dot{P}\in\dot{\mathcal{P}}$; the move of Chooser
is a distribution $\ddot{P}\in\ddot{\mathcal{P}}$; the move of Combiner
is a distribution $P^{+}\in\mathcal{P}$. Chooser and Combiner are
called \emph{statisticians}. If $P_{1}$ is the move of one statistician
and $P_{2}$ is that of the other, then the amount of utility paid
to the latter is the pair
\begin{equation}
U\left(\dot{P},P_{1},P_{2}\right)=\left\langle -D\left(\dot{P}||P_{2}\right),D\left(\dot{P}||P_{1}\rightsquigarrow P_{2}\right)\right\rangle ,\label{eq:utility-function}
\end{equation}
understood in terms of preferring $v=\left\langle v_{1},v_{2}\right\rangle \in\left[0,\infty\right)\times\left[0,\infty\right)$
over $u=\left\langle u_{1},u_{2}\right\rangle \in\left[0,\infty\right)\times\left[0,\infty\right)$
if and only if $u\preceq v$. Here, $u\preceq v$ means that either
$u_{1}<v_{1}$ or both $u_{1}=v_{1}$ and $u_{2}\le v_{2}$. Such
preferences are said to have \emph{lexicographic ordering} (Remark
\ref{rem:lexicographic}).

Thus, the utility paid to Combiner will be $U\left(\dot{P},\ddot{P},P^{+}\right)$
and that paid to Chooser will be $U\left(\dot{P},P^{+},\ddot{P}\right)$.
The utility paid to Nature will also be $U\left(\dot{P},P^{+},\ddot{P}\right)$,
with the implication that it is to the advantage of Nature and Chooser
to act as a \emph{coalition} with move $\left\langle \dot{P},\ddot{P}\right\rangle $
\citep[Ch. 5]{MaxUtility1944}. Although that reduces the three-player
game to a two-player game of the coalition versus Combiner, it is
not necessarily of zero sum.

The \emph{combination of the distributions in} $\ddot{\mathcal{P}}$
\emph{with truth constrained by} $\dot{\mathcal{P}}$ is defined as
Combiner's optimal move in the game. Since the utility paid to the
Nature-Chooser coalition is $U\left(\dot{P},P^{+},\ddot{P}\right)$,
Combiner's best move may be written as 
\begin{equation}
P^{+}=\arg\sup_{Q\in\mathcal{P}:\left\langle \dot{P}_{Q},\ddot{P}_{Q}\right\rangle \in\mathcal{P}_{Q}}^{\preceq}U\left(\dot{P}_{Q},\ddot{P}_{Q},Q\right)\label{eq:combined-distribution}
\end{equation}
for all $Q\in\mathcal{P}$, where  $\sup^{\preceq}$ is the least
upper bound according to $\preceq$, and
\begin{equation}
\mathcal{P}_{Q}=\arg\sup_{\left\langle P^{\prime},P^{\prime\prime}\right\rangle \in\dot{\mathcal{P}}\times\ddot{\mathcal{P}}}^{\preceq}U\left(P^{\prime},Q,P^{\prime\prime}\right).\label{eq:opponent-moves}
\end{equation}
While $P^{+}$ is not necessarily a plausible distribution, it is
typically at the center of the plausible set:
\begin{thm}
\label{thm:combination}Let $P^{+}$ denote the combination of the
distributions in $\ddot{\mathcal{P}}$ with truth constrained by $\dot{\mathcal{P}}$.
If $\dot{\mathcal{P}}\cap\ddot{\mathcal{P}}\ne\emptyset$, then 
\begin{equation}
P^{+}=\widetilde{\dot{\mathcal{P}}\cap\ddot{\mathcal{P}}}=P^{W_{\dot{\mathcal{P}}\cap\ddot{\mathcal{P}}}},\label{eq:combination}
\end{equation}
where $\widetilde{\dot{\mathcal{P}}\cap\ddot{\mathcal{P}}}$ is the
centroid of $\dot{\mathcal{P}}\cap\ddot{\mathcal{P}}$, and $W_{\dot{\mathcal{P}}\cap\ddot{\mathcal{P}}}$
is the weighting distribution induced by $\dot{\mathcal{P}}\cap\ddot{\mathcal{P}}$,
as defined by eq. \eqref{eq:weighting-distribution}.
\end{thm}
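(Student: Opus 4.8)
The plan is to unwind the nested optimizations in \eqref{eq:combined-distribution}--\eqref{eq:opponent-moves} from the inside out, using the lexicographic structure of $\preceq$ to collapse each stage to a scalar problem, and then to recognize what survives as the centroid \eqref{eq:centroid} so that Lemma \ref{lem:minimax-redundancy} can finish the job.

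First I would fix an arbitrary Combiner move $Q\in\mathcal{P}$ and identify the coalition's best-response set $\mathcal{P}_{Q}$ in \eqref{eq:opponent-moves}. Expanding the coalition's utility by \eqref{eq:utility-function} and \eqref{eq:information-gain},
\[
U\left(P^{\prime},Q,P^{\prime\prime}\right)=\left\langle -D\left(P^{\prime}||P^{\prime\prime}\right),\,D\left(P^{\prime}||Q\right)-D\left(P^{\prime}||P^{\prime\prime}\right)\right\rangle ,
\]
so the lexicographically dominant coordinate is maximized by minimizing $D\left(P^{\prime}||P^{\prime\prime}\right)$ over $\dot{\mathcal{P}}\times\ddot{\mathcal{P}}$. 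Since $D\ge0$ with equality only when its two arguments agree, and since $\dot{\mathcal{P}}\cap\ddot{\mathcal{P}}\ne\emptyset$ makes the value $0$ attainable, every pair in $\mathcal{P}_{Q}$ must be of the diagonal form $\left\langle P^{\prime},P^{\prime}\right\rangle $ with $P^{\prime}\in\dot{\mathcal{P}}\cap\ddot{\mathcal{P}}$. On such pairs the second coordinate reduces to $D\left(P^{\prime}||Q\right)$, so $\mathcal{P}_{Q}$ is exactly the set of $\left\langle P^{\star},P^{\star}\right\rangle $ with $P^{\star}\in\arg\sup_{P\in\dot{\mathcal{P}}\cap\ddot{\mathcal{P}}}D\left(P||Q\right)$.

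Next I would substitute this into Combiner's objective in \eqref{eq:combined-distribution}. For any admissible pair $\left\langle \dot{P}_{Q},\ddot{P}_{Q}\right\rangle =\left\langle P^{\star},P^{\star}\right\rangle \in\mathcal{P}_{Q}$, the information-gain term vanishes on the diagonal, giving Combiner's payoff $U\left(P^{\star},P^{\star},Q\right)=\left\langle -D\left(P^{\star}||Q\right),-D\left(P^{\star}||Q\right)\right\rangle $, both coordinates equal to $-\sup_{P\in\dot{\mathcal{P}}\cap\ddot{\mathcal{P}}}D\left(P||Q\right)$, independently of which maximizer $P^{\star}$ is chosen. Hence the joint $\arg\sup^{\preceq}$ in \eqref{eq:combined-distribution} over $Q$ and over $\left\langle \dot{P}_{Q},\ddot{P}_{Q}\right\rangle \in\mathcal{P}_{Q}$ degenerates to a scalar minimization, and $P^{+}=\arg\inf_{Q\in\mathcal{P}}\sup_{P\in\dot{\mathcal{P}}\cap\ddot{\mathcal{P}}}D\left(P||Q\right)$, which by \eqref{eq:centroid} is the centroid $\widetilde{\dot{\mathcal{P}}\cap\ddot{\mathcal{P}}}$. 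Applying Lemma \ref{lem:minimax-redundancy} to the nonempty set $\dot{\mathcal{P}}\cap\ddot{\mathcal{P}}$ then gives $P^{+}=P^{W_{\dot{\mathcal{P}}\cap\ddot{\mathcal{P}}}}$, which is \eqref{eq:combination}.

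The step I expect to be the main obstacle is the first one: making rigorous that the lexicographic $\arg\sup^{\preceq}$ in \eqref{eq:opponent-moves} truly pins the coalition to the diagonal inside $\dot{\mathcal{P}}\cap\ddot{\mathcal{P}}$ — ruling out off-diagonal responses and handling, under mild regularity, the possibility that a supremum is not attained (in which case $\mathcal{P}_{Q}$ is empty and such $Q$ are simply excluded from the outer supremum in \eqref{eq:combined-distribution}, the same regularity already implicit in Lemma \ref{lem:minimax-redundancy}). Everything after that is bookkeeping: once the diagonal is forced the two coordinates of Combiner's payoff coincide, the lexicographic three-player game collapses to the minimax-redundancy problem, and the cited lemma closes the argument.
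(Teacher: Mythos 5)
Your proposal is correct and follows essentially the same route as the paper's own proof: the first lexicographic coordinate forces the coalition onto the diagonal of $\dot{\mathcal{P}}\cap\ddot{\mathcal{P}}$, both coordinates of Combiner's payoff then collapse to $-\sup_{P^{\prime}\in\dot{\mathcal{P}}\cap\ddot{\mathcal{P}}}D\left(P^{\prime}||Q\right)$, and the resulting minimax problem is resolved by the definition of the centroid together with Lemma \ref{lem:minimax-redundancy}. The attainment and regularity caveats you flag are likewise left implicit in the paper's argument.
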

Let $\mathcal{A}$ denote an action space. A decision made by taking
the action $a\in\mathcal{A}$ that minimizes the expectation value
of a loss function $L:\Xi\times\mathcal{A}\rightarrow\mathbb{R}$
with respect to $P^{+}$ is optimal in the game when the utility
function of eq. \eqref{eq:utility-function} is replaced with 
\[
\left\langle -D\left(\dot{P}||P_{2}\right),D\left(\dot{P}||P_{1}\rightsquigarrow P_{2}\right),-\int L\left(\xi,a\right)dP_{2}\left(\xi\right)\right\rangle .
\]
The latter utility function is understood in terms of the lexicographic
ordering relation $\preccurlyeq$, which is defined such that $\left\langle u_{1},u_{2},u_{3}\right\rangle \preccurlyeq\left\langle v_{1},v_{2},v_{3}\right\rangle $
if and only if one of the following is true: (i) $u_{1}<v_{1}$; (ii)
$u_{1}=v_{1}$ and $u_{2}<v_{2}$; (iii) $u_{1}=v_{1}$, $u_{2}=v_{2}$,
and $u_{3}\le v_{3}$. On related orderings in the literature, see
Remark \ref{rem:lexicographic}.

\subsection{Combining discrete distributions}

Now let $\mathcal{P}$ denote the set of probability distributions
on $\left(\Xi,2^{\Xi}\right),$ where $\Xi$ is a finite set written
as $\Xi=\left\{ 0,1,...,\left|\Xi\right|-1\right\} $ without loss
of generality. Then the information divergence of $P$ with respect
to $Q$ \eqref{eq:cross-entropy} reduces to
\[
D\left(P||Q\right)=\sum_{i\in\Xi}P\left(\left\{ i\right\} \right)\log\frac{P\left(\left\{ i\right\} \right)}{Q\left(\left\{ i\right\} \right)}.
\]
For any $P\in\mathcal{P}$ and $\xi\sim P$, the $\left|\Xi\right|$-tuple
$T\left(P\right)=\left\langle P\left(\xi=0\right),P\left(\xi=1\right),\dots,P\left(\xi=\left|\Xi\right|-1\right)\right\rangle $
will be called the \emph{tuple representing }$P$. 

Let $\mathcal{P}^{\star}$ denote a nonempty subset of $\mathcal{P}$,
and let $\mathcal{T}\left(\mathcal{P}^{\star}\right)$ denote the
set of tuples representing the members of $\mathcal{P}^{\star}$,
i.e., $\mathcal{T}\left(\mathcal{P}^{\star}\right)=\left\{ T\left(P^{\star}\right):P^{\star}\in\mathcal{P}^{\star}\right\} $.
Likewise, noting that the map $\mathcal{T}$ is one-to-one, the \emph{extreme
subset} \emph{of }$\mathcal{P}^{\star}$ is defined as $\extreme\mathcal{P}^{\star}=\mathcal{T}^{-1}\left(\extreme\convex\mathcal{T}\left(\mathcal{P}^{\star}\right)\right),$
where $\extreme\convex\mathcal{T}\left(\mathcal{P}^{\star}\right)$
is the set of extreme points of $\convex\mathcal{T}\left(\mathcal{P}^{\star}\right)$,
the convex hull of $\mathcal{T}\left(\mathcal{P}^{\star}\right)$.
The extreme subset simplifies the problem of locating a centroid:
\begin{lem}
\label{lem:extreme-set}Let \textup{$\mathcal{P}^{\star}$ denote
a nonempty, finite subset of} $\mathcal{P}$. If there are a $Q\in\mathcal{P}$
and a $C>0$ such that $D\left(P^{\star}||Q\right)=C$ for all $P^{\star}\in\extreme\mathcal{P}^{\star}$,
then $Q$ is the centroid of $\mathcal{P}^{\star}$. 
\end{lem}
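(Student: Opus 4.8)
The plan is to reduce the claim to Lemma \ref{lem:minimax-redundancy} by showing that the hypothesis forces $Q$ to achieve the minimax value $\inf_{Q'}\sup_{P^\star\in\mathcal{P}^\star}D(P^\star\|Q')$. The first step is to replace the supremum over $\mathcal{P}^\star$ by a supremum over $\convex\mathcal{T}(\mathcal{P}^\star)$, or equivalently over $\convex\mathcal{P}^\star$, the set of finite mixtures of members of $\mathcal{P}^\star$. For fixed $Q'$, the map $P\mapsto D(P\|Q')$ is convex in $P$ (jointly convex in $(P,Q')$, hence convex in the first argument), so on a polytope it attains its maximum at an extreme point; since $\extreme\convex\mathcal{T}(\mathcal{P}^\star)\subseteq\mathcal{T}(\mathcal{P}^\star)$, we get
\[
\sup_{P^\star\in\mathcal{P}^\star}D(P^\star\|Q')=\sup_{P\in\convex\mathcal{P}^\star}D(P\|Q')=\max_{P^\star\in\extreme\mathcal{P}^\star}D(P^\star\|Q')
\]
for every $Q'$, where finiteness of $\mathcal{P}^\star$ (and hence of $\extreme\mathcal{P}^\star$) makes the last maximum attained. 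In particular the centroid of $\mathcal{P}^\star$ and the centroid of $\extreme\mathcal{P}^\star$ coincide, because they are defined by the same inner supremum.

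The second step uses the hypothesis: $D(P^\star\|Q)=C$ for all $P^\star\in\extreme\mathcal{P}^\star$ gives $\sup_{P^\star\in\mathcal{P}^\star}D(P^\star\|Q)=C$. It remains to show no $Q'\in\mathcal{P}$ does strictly better, i.e. $\sup_{P^\star\in\extreme\mathcal{P}^\star}D(P^\star\|Q')\ge C$ for all $Q'$. Here I would invoke Lemma \ref{lem:minimax-redundancy} applied to the finite family $\extreme\mathcal{P}^\star$: its centroid is the mixture $P^{W}$ for the induced weighting distribution $W=W_{\extreme\mathcal{P}^\star}$, and for that mixture one has the standard compensation identity
\[
\sup_{P^\star\in\extreme\mathcal{P}^\star}D\!\left(P^\star\,\middle\|\,P^{W}\right)=\int D\!\left(P_\phi\,\middle\|\,P^{W}\right)dW(\phi)=\text{(minimax value)}.
\]
Because $D(P^\star\|Q)=C$ is constant over $\extreme\mathcal{P}^\star$, averaging against $W$ gives $\int D(P^\star\|Q)\,dW=C$, and the nonnegativity of information divergence together with the "Pythagorean"/compensation decomposition $\int D(P^\star\|Q)\,dW=\int D(P^\star\|P^{W})\,dW+D(P^{W}\|Q)$ (valid since $P^W$ is a mixture of the $P^\star$) yields $C=(\text{minimax value})+D(P^{W}\|Q)\ge(\text{minimax value})$. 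Combined with the trivial inequality (minimax value) $\le\sup_{P^\star}D(P^\star\|Q)=C$, we get equality, $D(P^W\|Q)=0$, so $Q=P^W=\widetilde{\extreme\mathcal{P}^\star}=\widetilde{\mathcal{P}^\star}$.

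The main obstacle is justifying the compensation identity $\int D(P^\star\|Q)\,dW(\phi)=\int D(P^\star\|P^{W})\,dW(\phi)+D(P^{W}\|Q)$ in the generality of an arbitrary Borel space with Radon--Nikodym densities, including the bookkeeping of the $0\log 0$ conventions and the possibility of infinite divergences; for the finite-$\Xi$ setting of this subsection it is a one-line computation expanding the logarithms, but one should either restrict to that setting or cite the projection identity from \citet{Haussler1997b}. A secondary point to handle carefully is that $\convex\mathcal{T}(\mathcal{P}^\star)$ is the convex hull of finitely many points in $\mathbb{R}^{|\Xi|}$, so it is a compact polytope and the appeal to "max of a convex function is at an extreme point" is rigorous; the one-to-one correspondence $\mathcal{T}$ then transports this back to $\mathcal{P}$ so that "$\extreme\mathcal{P}^\star$" is well defined and finite. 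Given these, the argument is short: convexity reduces the problem to $\extreme\mathcal{P}^\star$, and the constancy hypothesis plus Lemma \ref{lem:minimax-redundancy} pin $Q$ down as the centroid.
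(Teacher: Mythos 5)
Your first step --- pushing the supremum from $\mathcal{P}^{\star}$ up to $\convex\mathcal{T}\left(\mathcal{P}^{\star}\right)$ and back down to $\extreme\mathcal{P}^{\star}$ by convexity of $D\left(\cdot||Q^{\prime}\right)$ in its first argument --- is sound, and it supplies a reduction that the paper's own one-line proof (a bare citation of Nakagawa's Kuhn--Tucker-type characterization) leaves implicit. The genuine gap is in your closing deduction. The compensation identity gives $C=M+D\left(P^{W}||Q\right)$ with $M$ the minimax value, hence $C\ge M$; the ``trivial inequality'' you then invoke, $M\le\sup_{P^{\star}}D\left(P^{\star}||Q\right)=C$, is the \emph{same} inequality. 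Two copies of $M\le C$ cannot be combined into $M=C$, so $D\left(P^{W}||Q\right)=0$ does not follow. What you need is the reverse bound $C\le M$, and it is not available from the stated hypotheses: the lemma as literally written fails once $\left|\Xi\right|\ge3$. Take $P_{1}=\left(1/2,1/4,1/4\right)$, $P_{2}=\left(1/4,1/2,1/4\right)$, and $Q=\left(0.3,0.3,0.4\right)$; symmetry gives $D\left(P_{1}||Q\right)=D\left(P_{2}||Q\right)=C>0$, yet the centroid is the symmetric mixture $\left(3/8,3/8,1/4\right)$, whose worst-case divergence is strictly smaller. Your own identity explains why: equidistance determines $Q$ only up to the additive slack $D\left(P^{W}||Q\right)$, which vanishes only when $Q$ coincides with the capacity-achieving mixture.

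The missing hypothesis is that $Q$ itself be a mixture of the extreme distributions; it is silently assumed by the result the paper cites and is satisfied in the lemma's only application (the proof of Theorem \ref{thm:extreme}, where $Q=P^{W_{\extreme\left(\dot{\mathcal{P}}\cap\ddot{\mathcal{P}}\right)}}$). If you add the hypothesis $Q=P^{V}$ for some weighting $V$ on $\extreme\mathcal{P}^{\star}$, your argument closes cleanly --- and more simply than you wrote it --- by running the averaging step with $V$ rather than with the optimal $W$: constancy gives $C=\int D\left(P_{\phi}||P^{V}\right)dV\left(\phi\right)\le\sup_{W}\int D\left(P_{\phi}||P^{W}\right)dW\left(\phi\right)=M$ by the value half of the redundancy--capacity theorem underlying Lemma \ref{lem:minimax-redundancy}, while $M\le\sup_{\phi}D\left(P_{\phi}||P^{V}\right)=C$ trivially, so $C=M$ and $Q$ attains the minimax. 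Your concerns about $0\log0$ conventions and general Borel spaces are secondary; in the finite-$\Xi$ setting of this subsection the identity is elementary, and the real obstacle is the one above.
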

More simplification is possible if at least one of the combining distributions
is plausible:
\begin{thm}
\label{thm:extreme}Let $P^{+}$ denote the combination of the distributions
in $\ddot{\mathcal{P}}$ with truth constrained by $\dot{\mathcal{P}}$.
If $\dot{\mathcal{P}}\cap\ddot{\mathcal{P}}$ is nonempty and finite,
then $P^{+}=P^{W_{\extreme\left(\dot{\mathcal{P}}\cap\ddot{\mathcal{P}}\right)}},$
where $W_{\extreme\left(\dot{\mathcal{P}}\cap\ddot{\mathcal{P}}\right)}$
is the weighting distribution induced \eqref{eq:weighting-distribution}
by $\extreme\left(\dot{\mathcal{P}}\cap\ddot{\mathcal{P}}\right)$,
the extreme subset of $\dot{\mathcal{P}}\cap\ddot{\mathcal{P}}$. 
\end{thm}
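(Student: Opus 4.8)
The plan is to chain three results already available: Theorem~\ref{thm:combination} reduces $P^{+}$ to the centroid of $\dot{\mathcal{P}}\cap\ddot{\mathcal{P}}$; a convexity argument identifies that centroid with the centroid of the extreme subset; and Lemma~\ref{lem:minimax-redundancy}, applied to the extreme subset, rewrites the latter as the mixture $P^{W_{\extreme(\dot{\mathcal{P}}\cap\ddot{\mathcal{P}})}}$.

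First, since $\dot{\mathcal{P}}\cap\ddot{\mathcal{P}}\ne\emptyset$, Theorem~\ref{thm:combination} yields $P^{+}=\widetilde{\dot{\mathcal{P}}\cap\ddot{\mathcal{P}}}$, the centroid of eq.~\eqref{eq:centroid}. Writing $\mathcal{S}=\dot{\mathcal{P}}\cap\ddot{\mathcal{P}}$, a nonempty finite subset of $\mathcal{P}$, it then suffices to prove $\widetilde{\mathcal{S}}=P^{W_{\extreme\mathcal{S}}}$. The crux is to show that, for every $Q\in\mathcal{P}$, $\sup_{P^{\star}\in\mathcal{S}}D(P^{\star}||Q)=\sup_{P^{\star}\in\extreme\mathcal{S}}D(P^{\star}||Q)$. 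The inequality ``$\ge$'' is immediate from $\extreme\mathcal{S}\subseteq\mathcal{S}$. For ``$\le$'', I would pass to the tuple representation: each $T(P^{\star})$ with $P^{\star}\in\mathcal{S}$ lies in the polytope $\convex\mathcal{T}(\mathcal{S})$, whose vertices are precisely $\extreme\convex\mathcal{T}(\mathcal{S})=\mathcal{T}(\extreme\mathcal{S})$; since each summand $p\mapsto p\log(p/q)$ has second derivative $1/p>0$, the map $P\mapsto D(P||Q)$ is convex on the simplex, and a convex $[0,\infty]$-valued function on a polytope attains its maximum at a vertex, so $D(P^{\star}||Q)\le\max_{P^{\star\star}\in\extreme\mathcal{S}}D(P^{\star\star}||Q)$ for every $P^{\star}\in\mathcal{S}$; taking the supremum over $\mathcal{S}$ gives the identity. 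Hence the two functions $Q\mapsto\sup_{P^{\star}\in\mathcal{S}}D(P^{\star}||Q)$ and $Q\mapsto\sup_{P^{\star}\in\extreme\mathcal{S}}D(P^{\star}||Q)$ coincide on $\mathcal{P}$, their infima are attained on the same set, and therefore $\widetilde{\mathcal{S}}=\widetilde{\extreme\mathcal{S}}$; Lemma~\ref{lem:extreme-set} records a companion criterion for verifying this common centroid.

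Finally, $\extreme\mathcal{S}$ is well defined (the map $\mathcal{T}$ is one-to-one) and nonempty (a nonempty polytope has at least one vertex), so Lemma~\ref{lem:minimax-redundancy} applied with $\mathcal{P}^{\star}=\extreme\mathcal{S}$ gives $\widetilde{\extreme\mathcal{S}}=P^{W_{\extreme\mathcal{S}}}$ for the weighting distribution $W_{\extreme\mathcal{S}}$ induced by $\extreme\mathcal{S}$ through eq.~\eqref{eq:weighting-distribution}; chaining, $P^{+}=\widetilde{\mathcal{S}}=\widetilde{\extreme\mathcal{S}}=P^{W_{\extreme(\dot{\mathcal{P}}\cap\ddot{\mathcal{P}})}}$, as asserted. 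I expect the main obstacle to be the convexity step: one must keep $D(P||Q)$ honestly $[0,\infty]$-valued so that the vertex argument, and hence the pointwise identity of the two supremum functions, goes through without case-splitting on whether $Q$ dominates every member of $\mathcal{S}$; one should also note that some $Q$ — for instance any strictly positive mixture of the members of $\mathcal{S}$ — makes the supremum finite, so the centroid produced by Theorem~\ref{thm:combination} genuinely exists. Everything else is bookkeeping with the two cited results.
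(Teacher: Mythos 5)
Your proof is correct, but the middle step is handled differently from the paper. Both arguments share the two endpoints: Theorem \ref{thm:combination} identifies $P^{+}$ with the centroid of $\mathcal{S}=\dot{\mathcal{P}}\cap\ddot{\mathcal{P}}$, and Lemma \ref{lem:minimax-redundancy} applied to $\extreme\mathcal{S}$ converts its centroid into the mixture $P^{W_{\extreme\mathcal{S}}}$. Where you diverge is in showing that the two centroids coincide. The paper argues in the opposite direction: it first claims that the centroid of $\extreme\mathcal{S}$ is equidistant (in information divergence) from every member of $\extreme\mathcal{S}$, and then invokes Lemma \ref{lem:extreme-set} --- the equalizer criterion --- to conclude that this same distribution is the centroid of all of $\mathcal{S}$. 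You instead prove the pointwise identity $\sup_{P^{\star}\in\mathcal{S}}D(P^{\star}||Q)=\sup_{P^{\star}\in\extreme\mathcal{S}}D(P^{\star}||Q)$ for every $Q$, using the convexity of $P\mapsto D(P||Q)$ in its first argument together with the fact that every tuple in $\mathcal{T}(\mathcal{S})$ is a convex combination of the vertices $\mathcal{T}(\extreme\mathcal{S})$, so the two minimax objectives are literally the same function of $Q$ and have the same minimizer. Your route is more self-contained and arguably more robust: the paper's equalizer step is asserted rather than derived, and the redundancy-capacity theory only guarantees equal divergence on the \emph{support} of the optimal weighting distribution, not over all extreme points, so Lemma \ref{lem:extreme-set} is applicable only after that gap is filled; your convexity argument bypasses the equalizer property entirely and needs only the elementary fact that a convex $[0,\infty]$-valued function on a polytope attains its supremum at a vertex, which you correctly flag as the point requiring care. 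The price is that your argument is tied to the finite-$\Xi$ tuple representation, but that is exactly the setting in which the theorem is stated, so nothing is lost.
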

The combination of a set of probabilities of the same hypothesis (§\ref{sec:Statistical-inference})
or event is simply a linear combination or mixture of the highest
and lowest of the plausible probabilities in the set, with the mixing
proportion determined optimally:
\begin{cor}
\label{cor:combining-probabilities}Let $P^{+}$ denote the combination
of the distributions in $\ddot{\mathcal{P}}$ with truth constrained
by $\dot{\mathcal{P}}$. Suppose $c$ distributions on $\left(\left\{ 0,1\right\} ,2^{\left\{ 0,1\right\} }\right)$
are to be combined $\left(\ddot{\mathcal{P}}=\left\{ \ddot{P}_{1},...,\ddot{P}_{c}\right\} \right)$,
and let $\dot{\mathcal{P}}_{0}=\left\{ \dot{P}\left(\left\{ 0\right\} \right):\dot{P}\in\dot{\mathcal{P}}\right\} $
and $\underline{\ddot{P}},\overline{\ddot{P}}\in\mathcal{P}$ such
that
\[
\underline{\ddot{P}}\left(\left\{ 0\right\} \right)=\min_{i=1,...,c:\ddot{P}_{i}\left(\left\{ 0\right\} \right)\in\dot{\mathcal{P}}_{0}}\ddot{P}_{i}\left(\left\{ 0\right\} \right);\,\overline{\ddot{P}}\left(\left\{ 0\right\} \right)=\max_{i=1,...,c:\ddot{P}_{i}\left(\left\{ 0\right\} \right)\in\dot{\mathcal{P}}_{0}}\ddot{P}_{i}\left(\left\{ 0\right\} \right).
\]
If $\ddot{P}_{i}\left(\left\{ 0\right\} \right)\in\dot{\mathcal{P}}_{0}$
for some $i\in\left\{ 1,...,c\right\} $, then $P^{+}=w^{+}\underline{\ddot{P}}+\left(1-w^{+}\right)\overline{\ddot{P}}$,
where 
\begin{equation}
w^{+}=\arg\sup_{w\in\left[0,1\right]}\left(w\Delta\left(\underline{\ddot{P}}||w\right)+\left(1-w\right)\Delta\left(\overline{\ddot{P}}||w\right)\right);\label{eq:probability-weight}
\end{equation}
\[
\Delta\left(\bullet||w\right)=D\left(\bullet||w\underline{\ddot{P}}+\left(1-w\right)\overline{\ddot{P}}\right).
\]
\end{cor}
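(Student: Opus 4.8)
The plan is to read Corollary \ref{cor:combining-probabilities} off Theorem \ref{thm:extreme} specialized to the two-point sample space $\left(\left\{0,1\right\},2^{\left\{0,1\right\}}\right)$, the only substantive step being the computation of the relevant extreme subset. First I would note that on a two-point space a distribution is determined by its mass at $\left\{0\right\}$; hence for each $i$ the condition $\ddot{P}_i\left(\left\{0\right\}\right)\in\dot{\mathcal{P}}_0$ holds iff some $\dot{P}\in\dot{\mathcal{P}}$ has $\dot{P}=\ddot{P}_i$, i.e.\ iff $\ddot{P}_i\in\dot{\mathcal{P}}\cap\ddot{\mathcal{P}}$. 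So the hypothesis that $\ddot{P}_i\left(\left\{0\right\}\right)\in\dot{\mathcal{P}}_0$ for some $i$ says exactly that $\dot{\mathcal{P}}\cap\ddot{\mathcal{P}}$ is nonempty, and it is finite as a subset of $\left\{\ddot{P}_1,\dots,\ddot{P}_c\right\}$. Theorem \ref{thm:extreme} then gives $P^+=P^{W_{\extreme\left(\dot{\mathcal{P}}\cap\ddot{\mathcal{P}}\right)}}$, and $\underline{\ddot{P}},\overline{\ddot{P}}$ are well-defined members of $\dot{\mathcal{P}}\cap\ddot{\mathcal{P}}$ since the min and max defining them range over the finite nonempty set $\left\{\ddot{P}_i\left(\left\{0\right\}\right):\ddot{P}_i\in\dot{\mathcal{P}}\cap\ddot{\mathcal{P}}\right\}$.

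Second, I would identify the extreme subset. The tuple representing $\ddot{P}_i$ is $T\left(\ddot{P}_i\right)=\left\langle\ddot{P}_i\left(\left\{0\right\}\right),1-\ddot{P}_i\left(\left\{0\right\}\right)\right\rangle$, so the tuples of the members of $\dot{\mathcal{P}}\cap\ddot{\mathcal{P}}$ are finitely many points lying on the segment from $\left\langle0,1\right\rangle$ to $\left\langle1,0\right\rangle$ in $\mathbb{R}^2$, ordered by their first coordinate. The convex hull of finitely many collinear points is the segment between those with least and greatest first coordinate, whose only extreme points are those two endpoints; transporting back through the bijection $\mathcal{T}$ and using the definitions of $\underline{\ddot{P}}$ and $\overline{\ddot{P}}$ yields $\extreme\left(\dot{\mathcal{P}}\cap\ddot{\mathcal{P}}\right)=\left\{\underline{\ddot{P}},\overline{\ddot{P}}\right\}$. (When $\underline{\ddot{P}}=\overline{\ddot{P}}$, i.e.\ there is a single plausible combining distribution, this set is a singleton, the induced weighting distribution is degenerate, and $P^+=\underline{\ddot{P}}$, which the asserted formula returns for every $w^+$.)

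Third, I would unpack $P^{W_{\left\{\underline{\ddot{P}},\overline{\ddot{P}}\right\}}}$ using the finite-family form of eq.\ \eqref{eq:weighting-distribution} displayed in the Example of Section \ref{sub:Information-theory}. Writing $w=W\left(\underline{\ddot{P}}\right)$ and $1-w=W\left(\overline{\ddot{P}}\right)$, one has $E_W\left\{\underline{\ddot{P}},\overline{\ddot{P}}\right\}=w\underline{\ddot{P}}+\left(1-w\right)\overline{\ddot{P}}$, so the objective in eq.\ \eqref{eq:weighting-distribution} with $\mathcal{P}^\star=\left\{\underline{\ddot{P}},\overline{\ddot{P}}\right\}$ becomes $w D\left(\underline{\ddot{P}}||w\underline{\ddot{P}}+\left(1-w\right)\overline{\ddot{P}}\right)+\left(1-w\right)D\left(\overline{\ddot{P}}||w\underline{\ddot{P}}+\left(1-w\right)\overline{\ddot{P}}\right)$, i.e.\ $w\Delta\left(\underline{\ddot{P}}||w\right)+\left(1-w\right)\Delta\left(\overline{\ddot{P}}||w\right)$ in the corollary's notation. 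Taking the argsup over $w\in\left[0,1\right]$ recovers $w^+$ as in eq.\ \eqref{eq:probability-weight}, whence $P^+=w^+\underline{\ddot{P}}+\left(1-w^+\right)\overline{\ddot{P}}$.

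The main obstacle I anticipate is the second step: arguing cleanly that the extreme subset collapses to $\left\{\underline{\ddot{P}},\overline{\ddot{P}}\right\}$. The facts to state with care are that every two-point probability tuple lies on the line $x_0+x_1=1$ (so the tuples are automatically collinear), that $\extreme$ is defined through the convex hull of the tuples rather than of the distributions, and that $\mathcal{T}$ is a bijection so the two endpoints pull back to $\underline{\ddot{P}}$ and $\overline{\ddot{P}}$. One should also record the degenerate case above so that the argsup in eq.\ \eqref{eq:probability-weight} is not misread as asserting a unique maximizer; when $\underline{\ddot{P}}\ne\overline{\ddot{P}}$ uniqueness does hold, because the objective is a strictly concave function of $w$, but that refinement is not needed here. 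Everything else is routine bookkeeping against Theorem \ref{thm:extreme} and eq.\ \eqref{eq:weighting-distribution}.
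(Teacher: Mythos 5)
Your proposal is correct and follows the same route as the paper, which simply states that the corollary "follows immediately from Theorem \ref{thm:extreme} and the definition of an extreme subset"; your three steps (translating the hypothesis into nonemptiness of $\dot{\mathcal{P}}\cap\ddot{\mathcal{P}}$, collapsing the extreme subset to $\left\{ \underline{\ddot{P}},\overline{\ddot{P}}\right\} $ via collinearity of the representing tuples, and unpacking eq. \eqref{eq:weighting-distribution} for a two-element family) are exactly the details the paper leaves implicit. Your attention to the degenerate case $\underline{\ddot{P}}=\overline{\ddot{P}}$ is a welcome addition not present in the paper.
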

\begin{proof}
This follows immediately from Theorem \ref{thm:extreme} and the definition
of an extreme subset.
\end{proof}
By eq. \eqref{eq:weight-bound}, $37\%\dot{\le}w^{+}\dot{\le}63\%$,
implying that $P^{+}\left(\left\{ 0\right\} \right)$ is close to
the arithmetic mean $\left[\underline{\ddot{P}}\left(\left\{ 0\right\} \right)+\overline{\ddot{P}}\left(\left\{ 0\right\} \right)\right]/2$,
as \citet{Shulman2004} observed in a coding context. Fig. \ref{fig:weight}
plots $w^{+}$ versus $\underline{\ddot{P}}\left(\left\{ 0\right\} \right)$
and $\overline{\ddot{P}}\left(\left\{ 0\right\} \right)$, and Fig.
\ref{fig:means} compares the resulting $P^{+}\left(\left\{ 0\right\} \right)$
to the arithmetic mean, the geometric mean, and the harmonic mean
of $\underline{\ddot{P}}\left(\left\{ 0\right\} \right)$ and $\overline{\ddot{P}}\left(\left\{ 0\right\} \right)$. 

The next result is important for multiple hypothesis testing (§\ref{sec:Statistical-inference})
and, more generally, for combining probabilities of independent events
rather than entire distributions. 
\begin{cor}
\label{cor:discrete}Let $\xi=\left\langle \xi_{1},\dots,\xi_{N}\right\rangle $,
where $\xi_{j}$ is a Bernoulli random variable and $\xi_{j}$ is
independent of $\xi_{J}$ for all $j,J=1,\dots N$. (The Bernoulli
distributions need not be identical: in general, each has a different
probability $\ddot{P}\left(\xi_{i}=0\right)$ of failure. Every $\ddot{P}\in\ddot{\mathcal{P}}$
 has a one-to-one correspondence to a tuple $\left\langle \ddot{P}\left(\xi_{1}=0\right),\dots,\ddot{P}\left(\xi_{N}=0\right)\right\rangle $.)
Assuming $\dot{\mathcal{P}}\cap\ddot{\mathcal{P}}$ is nonempty and
finite, let $\ddot{P}_{i}$ denote the $i$th of the $\nu$ members
of $\extreme\left(\dot{\mathcal{P}}\cap\ddot{\mathcal{P}}\right)=\left\{ \ddot{P}_{1},\dots,\ddot{P}_{\nu}\right\} $.
If the constraints are in the form of lower and upper probabilities
$\underline{P}_{0,1},\dots,\underline{P}_{0,N}$ and $\overline{P}_{0,1},\dots,\overline{P}_{0,N}$
such that
\begin{equation}
\dot{\mathcal{P}}=\left\{ \dot{P}\in\mathcal{P}:\underline{P}{}_{0j}\le\dot{P}\left(\xi_{j}=0\right)\le\overline{P}_{0j},j\in\left\{ 1,\dots,N\right\} \right\} ,\label{eq:discrete-plausible}
\end{equation}
then the set of combining distributions that satisfy the constraints
is 
\begin{equation}
\dot{\mathcal{P}}\cap\ddot{\mathcal{P}}=\left\{ \ddot{P}\in\ddot{\mathcal{P}}:\underline{P}{}_{0j}\le\ddot{P}\left(\xi_{j}=0\right)\le\overline{P}_{0j},j\in\left\{ 1,\dots,N\right\} \right\} .\label{eq:intersection}
\end{equation}
Further, $P^{+}=P^{\mathbf{w}_{\extreme\left(\dot{\mathcal{P}}\cap\ddot{\mathcal{P}}\right)}}$
is the combination of the distributions in $\ddot{\mathcal{P}}$ with
truth constrained by $\dot{\mathcal{P}}$, where 

\begin{equation}
\mathbf{w}_{\extreme\left(\dot{\mathcal{P}}\cap\ddot{\mathcal{P}}\right)}=\arg\sup_{\left\langle w_{1},\dots,w_{\nu}\right\rangle \in\mathfrak{W}}\sum_{i=1}^{\nu}w_{i}\sum_{j=1}^{N}\sum_{k=0}^{1}\ddot{P}_{i}\left(\xi_{j}=k\right)\log\frac{\ddot{P}_{i}\left(\xi_{j}=k\right)}{P^{\left\langle w_{1},\dots,w_{\nu}\right\rangle }\left(\xi_{j}=k\right)}\label{eq:discrete-weight}
\end{equation}
with the supremum over $\mathfrak{W}=\left\{ \left\langle w_{1}^{\prime},\dots,w_{\nu}^{\prime}\right\rangle \in\left(0,1\right]^{\nu}:\sum_{i=1}^{\nu}w_{i}^{\prime}=1\right\} $\textup{. }\end{cor}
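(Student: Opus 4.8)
The plan is to obtain the corollary from Theorem \ref{thm:extreme} by unpacking the induced weighting distribution under the present product structure. First I would settle the intersection identity \eqref{eq:intersection}. By hypothesis each $\ddot P\in\ddot{\mathcal P}$ is determined by the tuple $\langle\ddot P(\xi_1=0),\dots,\ddot P(\xi_N=0)\rangle$ of its marginal failure probabilities, and by \eqref{eq:discrete-plausible} whether a distribution belongs to $\dot{\mathcal P}$ depends only on those same marginals lying in the boxes $[\underline{P}_{0j},\overline{P}_{0j}]$. Hence $\ddot P\in\dot{\mathcal P}\cap\ddot{\mathcal P}$ exactly when $\ddot P\in\ddot{\mathcal P}$ and $\underline{P}_{0j}\le\ddot P(\xi_j=0)\le\overline{P}_{0j}$ for every $j$, which is \eqref{eq:intersection}; in particular $\dot{\mathcal P}\cap\ddot{\mathcal P}$ consists of product distributions.

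Since $\dot{\mathcal P}\cap\ddot{\mathcal P}$ is nonempty and finite, Theorem \ref{thm:extreme} gives $P^+=P^{W_{\extreme(\dot{\mathcal P}\cap\ddot{\mathcal P})}}$. Writing $\extreme(\dot{\mathcal P}\cap\ddot{\mathcal P})=\{\ddot P_1,\dots,\ddot P_\nu\}$ (a subset of the finitely many combining distributions, hence finite) and specializing the definition \eqref{eq:weighting-distribution} of the induced weighting distribution to this family, as in the example above, $W_{\extreme(\dot{\mathcal P}\cap\ddot{\mathcal P})}$ is the probability vector $\langle w_1,\dots,w_\nu\rangle$ maximizing $\sum_{i=1}^\nu w_i D(\ddot P_i || P^{\langle w_1,\dots,w_\nu\rangle})$. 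One then checks that the maximizer lies in $\mathfrak W$, i.e. has strictly positive components — an extreme point of the convex hull cannot be dropped from the support of the weighting distribution without strictly lowering the objective — so the supremum is over $\mathfrak W$ exactly as in \eqref{eq:discrete-weight}.

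The substantive step is to rewrite $D(\ddot P_i || P^{\langle w_1,\dots,w_\nu\rangle})$ in the coordinatewise form of \eqref{eq:discrete-weight}. Because $\xi_1,\dots,\xi_N$ are independent under each $\ddot P_i$, the measure $\ddot P_i$ is the product of its one-dimensional marginals, and information divergence tensorizes over such a product: $D(\ddot P_i || Q)=\sum_{j=1}^N\sum_{k=0}^1\ddot P_i(\xi_j=k)\log\frac{\ddot P_i(\xi_j=k)}{Q(\xi_j=k)}$ for any $Q$ that likewise factors over $\xi_1,\dots,\xi_N$. Substituting this into the objective of the preceding paragraph turns it into the displayed double sum in \eqref{eq:discrete-weight}, so its maximizer is $\mathbf w_{\extreme(\dot{\mathcal P}\cap\ddot{\mathcal P})}$ and $P^+=P^{\mathbf w_{\extreme(\dot{\mathcal P}\cap\ddot{\mathcal P})}}$.

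The \emph{main obstacle} is the tensorization identity just used: it requires that the combined distribution also factor over $\xi_1,\dots,\xi_N$, whereas a generic mixture $\sum_i w_i\ddot P_i$ of product distributions need not factor. I expect to handle this by tracking only the per-coordinate marginals $P^+(\xi_j=\cdot)$ — the quantities actually at issue when combining probabilities of independent events — and by verifying, via Lemma \ref{lem:extreme-set} together with the tensorization of $D$, that the product distribution whose $j$th marginal is $\xi_j\mapsto\sum_i w_i\ddot P_i(\xi_j=\cdot)$ equalizes the divergences $D(\ddot P_i || \cdot)$ over $\extreme(\dot{\mathcal P}\cap\ddot{\mathcal P})$ at the optimal weights and is therefore the centroid supplied by Lemma \ref{lem:minimax-redundancy} and Theorem \ref{thm:extreme}. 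Confirming that the maximizer of the coordinatewise objective genuinely produces such an equalizer, rather than merely an upper bound on the joint-space divergences, is the delicate point on which the argument turns.
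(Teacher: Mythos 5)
Your route is the paper's route: eq.\ \eqref{eq:intersection} is read off from eq.\ \eqref{eq:discrete-plausible} (the paper calls it ``obvious''), Theorem \ref{thm:extreme} reduces the problem to the weighting distribution induced by $\extreme\left(\dot{\mathcal{P}}\cap\ddot{\mathcal{P}}\right)$, and the independence of $\xi_{1},\dots,\xi_{N}$ is used to turn the divergence in eq.\ \eqref{eq:weighting-distribution} into the double sum of eq.\ \eqref{eq:discrete-weight}. The paper performs that last step in a single sentence by citing the chain rule for information divergence, so the ``main obstacle'' you isolate is precisely the step the published proof compresses. Your worry is legitimate: the chain rule gives $D\left(P||Q\right)=\sum_{j}\E_{P}D\left(P\left(\xi_{j}=\cdot\mid\xi_{1},\dots,\xi_{j-1}\right)||Q\left(\xi_{j}=\cdot\mid\xi_{1},\dots,\xi_{j-1}\right)\right)$, which collapses to $\sum_{j}\sum_{k}P\left(\xi_{j}=k\right)\log\left[P\left(\xi_{j}=k\right)/Q\left(\xi_{j}=k\right)\right]$ only when $Q$ also factors over the coordinates; a mixture $\sum_{i}w_{i}\ddot{P}_{i}$ of distinct product measures does not factor, and the difference between the coordinatewise objective \eqref{eq:discrete-weight} and the joint objective $\sum_{i}w_{i}D\left(\ddot{P}_{i}||\sum_{i'}w_{i'}\ddot{P}_{i'}\right)$ is the $w$-dependent mutual information among the $\xi_{j}$ under the mixture, so the two maximizers do not trivially coincide.

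Where your proposal falls short of a proof is that the repair is announced rather than executed. The natural way to close the gap---the content your last paragraph gestures at---is to note that the objective in \eqref{eq:discrete-weight} equals $\sum_{i}w_{i}D\left(\ddot{P}_{i}||\prod_{j}P^{\left\langle w\right\rangle }\left(\xi_{j}=\cdot\right)\right)$ exactly, since both arguments are then products; that this is a concave function of $\left\langle w_{1},\dots,w_{\nu}\right\rangle $ on the simplex; and that its first-order conditions force $D\left(\ddot{P}_{i}||\prod_{j}P^{\left\langle w^{+}\right\rangle }\left(\xi_{j}=\cdot\right)\right)$ to take a common value $C$ for every $i$ with positive weight, after which Lemma \ref{lem:extreme-set} identifies the product of the mixed marginals as the centroid and Theorem \ref{thm:extreme} identifies that centroid with $P^{+}$. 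You do not carry out this equalization argument, and one ingredient you assert without proof is false in general: an extreme point of $\convex\mathcal{T}\left(\dot{\mathcal{P}}\cap\ddot{\mathcal{P}}\right)$ \emph{can} receive zero weight at the optimum (capacity-achieving weightings routinely have support smaller than the set of extreme points), so full support on $\extreme\left(\dot{\mathcal{P}}\cap\ddot{\mathcal{P}}\right)$ must either be hypothesized, as the restriction of the supremum to $\mathfrak{W}$ implicitly does, or the equalization must be checked only over the support actually obtained. Until that step is supplied, your argument---like the paper's one-line appeal to the chain rule---shows that \eqref{eq:discrete-weight} is the natural coordinatewise surrogate, not that its maximizer yields the combination defined by the game of Sec.\ \ref{sub:Distribution-combination-game}.
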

\begin{proof}
Eq. \eqref{eq:intersection} is obvious from eq. \eqref{eq:discrete-plausible}.
By the independence condition, the chain rule for information divergence
\citep[see, e.g.,][Theorem 2.5.3]{CoverThomas1991} reduces finding
the weighting distribution for according to Theorem \ref{thm:extreme}
to finding 
\[
\mathbf{w}_{\extreme\left(\dot{\mathcal{P}}\cap\ddot{\mathcal{P}}\right)}=\arg\sup_{\left\langle w_{1},\dots,w_{\nu}\right\rangle \in\mathfrak{W}}\sum_{i=1}^{\nu}w_{i}\sum_{j=1}^{N}D\left(\ddot{P}_{i}\left(\xi_{j}=\bullet\right)||P^{\left\langle w_{1},\dots,w_{\nu}\right\rangle }\left(\xi_{j}=\bullet\right)\right),
\]
which, with eq. \eqref{eq:cross-entropy}, yields eq. \eqref{eq:discrete-weight}.
\end{proof}

\section{\label{sec:Statistical-inference}Distribution combination for statistical
inference}

Whereas much of the literature focuses on combining priors from experts,
Sec. \ref{sub:Combining-posteriors} instead focuses on combining
posteriors. The posterior-inference setting enables the combination
not only of Bayesian posterior distributions but also of confidence
intervals and p-values encoded as frequentist posterior distributions,
as will be explained in Sec. \ref{sub:Frequentist-posteriors}.

\subsection{\label{sub:Combining-posteriors}Combining posterior distributions
and probabilities}

In the context of posterior statistical inference, $\xi$ represents
a random parameter. Further, all distributions in $\dot{\mathcal{P}}$,
the set of plausible distributions of $\xi$, and $\ddot{\mathcal{P}}$,
the set of combining distributions of $\xi$, are posterior with respect
to the same data set $x$. All distributions in $\dot{\mathcal{P}}$
and all Bayesian posteriors in $\ddot{\mathcal{P}}$ are conditional
on $X=x$, where, for any such posterior, the distribution of $X$
depends on the random value of the parameter drawn from some prior.
$\ddot{\mathcal{P}}$ may also contain non-Bayesian posteriors such
as a confidence posterior or a distribution derived from a confidence
posterior (§\ref{sub:Frequentist-posteriors}).

Accordingly, the information divergence $D\left(P||Q\right)$ becomes
the amount of information that would be gained by replacing any posterior
$Q$ with the true posterior $P$. That interpretation leads to viewing
$D\left(P^{\prime}||P^{\prime\prime}\rightsquigarrow Q\right)$ as
the amount of information gained for statistical inference by using
some posterior $Q\in\mathcal{P}$ rather than a given posterior $P^{\prime\prime}\in\ddot{\mathcal{P}}$
when the plausible posterior is $P^{\prime}\in\dot{\mathcal{P}}$.
Thus, $D\left(P^{\prime}||P^{\prime\prime}\rightsquigarrow Q\right)$
defines the \emph{inferential gain} \emph{of} $Q$ \emph{relative
to} $P^{\prime\prime}$ \emph{given} $P^{\prime}$ \citep{continuum,caution}. 

The posterior distributions are combined according to Sec. \ref{sub:Distribution-combination-game},
using Theorem \ref{thm:combination} whenever possible. If $\dot{\mathcal{P}}$
represents the uncertainty around a Bayesian posterior $\dot{P}\in\dot{\mathcal{P}}$,
as in \citet{RefWorks:1618} and \citet{caution}, then $\dot{P}$
is included in $\ddot{\mathcal{P}}$ as one of the distributions to
combine. The resulting combination $P^{+}$ is then used to minimize
expected loss in order to optimize actions such as point, interval,
and function estimators and predictors.

In model selection and hypothesis testing, $\xi$ has 0 or 1 as its
realized value, with $\xi=0$ if a reduced model or null hypothesis
is true or $\xi=1$ if a full model or alternative hypothesis is true.
Corollary \ref{cor:combining-probabilities} applies to this problem
with $\dot{\mathcal{P}}_{0}$ as the set of feasible null hypothesis
posterior probabilities and $\ddot{\mathcal{P}}_{0}=\left\{ \ddot{P}_{1}\left(\left\{ 0\right\} \right),...,\ddot{P}_{c}\left(\left\{ 0\right\} \right)\right\} $
as the set of null hypothesis posterior probabilities to be combined,
where $\ddot{P}_{i}\left(\left\{ 0\right\} \right)=\ddot{P}_{i}\left(\xi=0\right)$
is the $i$th posterior probability that the null hypothesis is true.
Thus, the combination posterior probability that the null hypothesis
is true is
\begin{equation}
P^{+}\left(\xi=0\right)=w^{+}\underline{\ddot{P}}\left(\xi=0\right)+\left(1-w^{+}\right)\overline{\ddot{P}}\left(\xi=0\right).\label{eq:combined-hypothesis-probability}
\end{equation}
Here, $\underline{\ddot{P}}\left(\xi=0\right)=\underline{\ddot{P}}\left(0\right)$
and $\overline{\ddot{P}}\left(\xi=0\right)=\overline{\ddot{P}}\left(0\right)$
are respectively the lowest and highest null hypothesis posterior
probabilities that are in $\dot{\mathcal{P}}_{0}\cap\ddot{\mathcal{P}}_{0}$,
presently assumed to have at least one member, and $w^{+}$ is determined
by eq. \eqref{eq:probability-weight}. The same idea applies to multiple
hypothesis testing, as will be seen in Sec. \ref{sec:Large-scale-case-study}.

\subsection{\label{sub:Frequentist-posteriors}Combining frequentist posteriors}

\subsubsection{Confidence posteriors}

As mentioned in Sec. \ref{sub:Frequentist-posteriors}, the set $\ddot{\mathcal{P}}$
of posterior combining distributions can include those representing
confidence intervals and p-values. To emphasize their comparability
to Bayesian posterior distributions, these frequentist distributions
are called {}``confidence posterior distributions'' \citep{CoherentFrequentism},
also known as {}``confidence distributions'' \citep[see, e.g.,][]{RefWorks:127}. 

Briefly, a \emph{confidence posterior distribution} that corresponds
to a set of nested confidence intervals evaluated for the observed
data is defined as the probability distribution according to which
the posterior probability that the interest parameter lies within
a confidence interval is equal to the confidence level of the interval.
For example, if a 95\% confidence interval for a real parameter is
$\left[-2.2,1.7\right]$, then there is a 95\% posterior probability
that the parameter is between $-2.2$ and $1.7$ according to the
confidence posterior. The same confidence posterior for the data also
assigns posterior probability to parameter intervals of interest according
to the confidence levels of the matching confidence intervals, e.g.,
\citet{conditional2009} considered a one-sided p-value as the posterior
probability that the population mean is in $\left(-\infty,0\right)$
rather than $\left[0,\infty\right)$. \citet{RefWorks:249} and \citet{Polansky2007b}
considered exact confidence posterior probabilities of intervals or
other regions specified before observing the data as ideal cases of
{}``attained confidence levels'' and {}``observed confidence levels,''
respectively.

\citet{CoherentFrequentism,conditional2009} proposed taking actions
that minimize expected loss with respect to a confidence posterior
distribution. Since that distribution is a Kolmogorov probability
distribution of the parameter of interest, such actions comply with
most axiomatic systems usually considered Bayesian, e.g., the systems
of \citet{MaxUtility1944} and \citet{RefWorks:126}. A human or artificial
intelligent agent that bets and makes other decisions in accordance
with minimizing expected loss with respect to a confidence posterior
corresponds to equating the confidence level of a confidence interval
with the agent's level of belief that the parameter value lies in
the interval \citep{CoherentFrequentism}. 

The decision-theoretic framework makes confidence posteriors suitable
as members of $\ddot{\mathcal{P}}$, the set of combining distributions,
according to the methodology of Sec. \ref{sec:Framework-for-combining}.
They can be combined to not only with each other, but also with other
parameter distributions such as Bayesian posteriors based on proper
or improper priors. The same applies to a probability distribution
of a function of a parameter drawn from a confidence posterior. Such
posteriors have been used to equate posterior probabilities of simple
null hypotheses with two-sided p-values \citep{RefWorks:1369,smallScale,continuum,caution}.
For terminological economy, these posteriors will now be called {}``confidence
posteriors.''

To approximate Bayesian model averaging, \citet{RefWorks:1016} recommended
a weighted harmonic mean of p-values computed from the same data,
provided that they range from $10^{-3}$ to $0.2$, the limits used
in Fig. \ref{fig:means}. Since the one-sided or two-sided p-values
are posterior probabilities of the null hypothesis derived from different
confidence posteriors, eqs. \eqref{eq:probability-weight} and \eqref{eq:combined-hypothesis-probability}
can be applied with $\underline{\ddot{P}}\left(\xi=0\right)$ and
$\overline{\ddot{P}}\left(\xi=0\right)$ as the lowest and highest
p-values that are plausible as null hypothesis probabilities, i.e.,
that are in $\dot{\mathcal{P}}_{0}$. The resulting combination p-value
differs from that of \citet{RefWorks:1016} in two respects: the mean
is arithmetic \eqref{eq:combined-hypothesis-probability} rather than
harmonic and, even more important, the weights are optimal for the
game \eqref{eq:probability-weight} rather than subjective. The use
of optimal weights leads to preparing for the worst case by averaging
only the two most extreme p-values rather than all of them. 
\begin{example}
Given a small sample of data drawn from a distribution that might
be approximately normal, let $p^{\left(1\right)}$ and $p^{\left(2\right)}$
denote the two-sided p-values according to the \emph{t}-test and the
Wilcoxon signed-rank test, respectively; $p^{\left(1\right)}<p^{\left(2\right)}$.
Under conditions often applicable to simple (point) hypothesis testing
with a diffuse alternative hypothesis  \citep{RefWorks:1218}, the
plausible set of posterior probabilities of the null hypothesis is
$\dot{\mathcal{P}}_{0}=\left[\underline{\dot{P}}_{0},1\right]$ with
lower bound
\[
\underline{\dot{P}}_{0}=\left(1+\left(\frac{1-\underline{\dot{P}}_{0}^{\pri}}{\underline{\dot{P}}_{0}^{\pri}ep^{\left(2\right)}\left(x\right)\log\left[1/p^{\left(2\right)}\left(x\right)\right]}\right)\right)^{-1}\wedge\underline{\dot{P}}_{0}^{\pri},
\]
where $\wedge$ is the minimum operator, and $\underline{\dot{P}}_{0}^{\pri}$
is the lowest plausible prior probability that the null hypothesis
is true \citep{caution}. Then the combined p-value $P^{+}\left(\left\{ 0\right\} \right)$
is $\underline{\dot{P}}_{0}$ if $p^{\left(2\right)}<\underline{\dot{P}}_{0}$,
$p^{\left(2\right)}$ if $p^{\left(1\right)}<\underline{\dot{P}}_{0}\le p^{\left(2\right)}$,
and, according to Corollary \ref{cor:combining-probabilities}, the
weighted arithmetic mean $w^{+}p^{\left(1\right)}+\left(1-w^{+}\right)p^{\left(2\right)}$
if $p^{\left(1\right)}\ge\underline{\dot{P}}_{0}$ with the weights
$w^{+}$ and $\left(1-w^{+}\right)$ fixed by eq. \eqref{eq:probability-weight}.
Of the three cases, the third yields a combined p-value that differs
from the blended posterior probability suggested in \citet{continuum}.
$\exaend$
\end{example}
When $\ddot{\mathcal{P}}$ consists of a single confidence posterior
$\ddot{P}$, the resulting $P^{+}$, degenerate as a {}``combination''
of a single distribution, is better viewed as a solution to the problem
of blending frequentist inference with constraints encoded as the
Bayesian posteriors that constitute $\dot{\mathcal{P}}$. That solution
in general differs from the minimax-type solutions considered \citep{continuum,caution}.
Under $\ddot{P}\in\dot{\mathcal{P}}$ and the convexity of $\dot{\mathcal{P}}$,
they lead to the $P$ that minimizes the information divergence $D\left(P||\ddot{P}\right)$,
which is dual to the $Q$ that minimizes $D\left(\ddot{P}||Q\right)$,
the information divergence that is minimized \eqref{eq:minKL} when
maximizing eq. \eqref{eq:utility-function} according to the game
introduced in Sec. \ref{sub:Distribution-combination-game}.

\subsubsection{\label{sub:Multiple-comparison-procedures}Multiple comparison procedures}

The distribution-combination theory is now applied to adjustments
for multiple comparisons by formalizing the observation that p-values
are often adjusted to the extent of prior belief in the null hypothesis.
That is, multiple comparison procedures (MCPs) designed to control
error rates are {}``most likely to be used, if at all, when most
of the individual null hypotheses are essentially correct'' \citep[p.88]{CoxBook}.
 A first-order formalization would take the p-value adjusted according
to an MCP as the posterior probability of the null hypothesis. To
the extent that the knowledge or opinion of the agent is such that
its decisions would be made to minimize the expected loss with respect
to that posterior distribution, the use of the MCP is warranted. In
this interpretation, combining p-values across different MCPs is equivalent
to combining the posterior distributions that represent the corresponding
opinions.
\begin{example}
The Bonferroni procedure controls the family-wise error rate, the
probability that one or more true null hypotheses will be rejected,
at any level $\alpha\in\left[0,1\right]$. That is accomplished on
the basis of p-values $p_{1},\dots,p_{N}$ by rejecting the $i$th
of $N$ null hypotheses if the adjusted p-value $Np_{i}\wedge1$ is
less than $\alpha$. Thus, the posterior probabilities generated by
the Bonferroni procedure are appropriate only when the prior probability
of each null hypothesis is inversely proportional to the number of
tests. As \citet{RefWorks:195} pointed out, the {}``Bonferroni method
is based upon the implicit presumption of a moderate degree of belief
in the event'' that all null hypotheses under consideration are true
and that the prior truth values of the hypotheses are approximately
independent. 

Accordingly, the Bonferroni method is widely used to analyze genome-wide
association data, largely because only an extremely small fraction
$\pi_{1}$ of the hundreds of thousands of markers tested are thought
to be associated with the trait of interest. \citet{RefWorks:199}
guessed $10^{-6}\le\pi_{1}\le10^{-4}$, interpreting $\pi_{1}$ as
the prior probability of association between a given marker and the
trait. The corresponding range of posterior probabilities and Bayes
factors such as those of \citet{RefWorks:199} would define $\dot{\mathcal{P}}$
for ruling out MCPs that yield implausible results (Theorem \ref{thm:extreme}).
(On the other hand, some evidence that $\pi_{1}\ge10^{-4}$ is now
available in preliminary estimates \citep{GWAselect} and in indications
that thousands of small-effect SNPs may be associated with any particular
disease \citep{Gibson2010,Park2010}.) $\exaend$
\end{example}
By assuming adjusted p-values are equal to independent posterior
probabilities of the null hypotheses, the methodology of Corollary
\ref{cor:discrete} can combine the results of various MCPs.

\section{\label{sec:Large-scale-case-study}Large-scale case study}

Using microarray technology, \citet{RefWorks:8} measured the levels
of tomato gene expression for 13,440 genes at three days after the
breaker stage of ripening, but one or more measurements were missing
for 7337 genes. The data available across all $n=6$ biological replicates
for $N=6103$ of the genes illustrate the methodology of Secs. \ref{sec:Framework-for-combining}
and \ref{sec:Statistical-inference}. 

For $j=1,\dots,N$, the logarithms of the measured ratios of mutant
expression to wild-type expression in the $j$th gene were modeled
as realizations of a normal variate and are denoted by the $n$-tuple
$x_{j}$. Because the mean and variance are unknown, the one-sample
\emph{t}-test was used to test the null hypothesis $\left(\xi_{j}=0\right)$
that the population mean is 0 against the two-sided alternative hypothesis
$\left(\xi_{j}=1\right)$ that there is differential expression of
the $j$th gene between mutant and wild type, i.e., that the mutation
affects the expression of gene $j$.

The posterior probability of a null hypothesis conditional on the
p-value is called its \emph{local false discovery rate} (LFDR) \citep{RefWorks:53}.
Three very different methods $\left(i=1,2,3\right)$ of estimating
the LFDR were considered. The first two methods are based on fitting
a histogram of transformed p-values that is described by \citet{RefWorks:208}.
They differ in that whereas one assumes the p-value has a uniform
distribution under the null hypothesis $\left(i=1\right)$, the other
estimates the p-value null distribution by maximizing a truncated
likelihood function $\left(i=2\right)$. Each method has its own advantages
(§\ref{sec:Introduction}). The distributions are called the \emph{theoretical
null} and the \emph{empirical null}, respectively. The third method
for combination is the q-value \citep{RefWorks:282}, here defined
according to the algorithm of \citet{RefWorks:288} as the lowest
false discovery rate at which a null hypothesis will be rejected $\left(i=3\right)$.
While the q-value was not originally intended as an estimator of the
LFDR, it is included here since its negative bias as such an estimator
\citep{ISI:000272935000021} may have a corrective effect on the positive
bias (conservatism) of the first two LFDR estimators.

For this application, $\mathcal{P}$ is the set of all probability
distributions on $\left(\left\{ 0,1\right\} ^{N},2^{\left\{ 0,1\right\} ^{N}}\right)$.
Corresponding to those three methods, let $\ddot{P}_{1}$, $\ddot{P}_{2}$,
and $\ddot{P}_{3}$ denote the members of $\mathcal{P}$ such that
the $i$th estimate of the LFDR of the $j$th gene is $\ddot{P}_{i}\left(\xi_{j}=0\right)$.
To combine the three methods, $\ddot{\mathcal{P}}=\left\{ \ddot{P}_{1},\ddot{P}_{2},\ddot{P}_{3}\right\} $
is taken as the set of $\nu=3$ combining distributions.

For the $j$th gene, $f\left(t\left(x_{j}\right);\theta_{j}\right)$
will represent the probability density of the Student $t$ statistic
$t\left(x_{j}\right)$, where $\theta_{j}$ is the reciprocal of the
coefficient of variation and, for any $\theta\in\mathbb{R}$, $f\left(\bullet;\theta\right)$
is the probability density function of $\left|T\right|$ when $T$
has the noncentral $t$ distribution of $n-1$ degrees of freedom
and noncentrality parameter $\sqrt{n}\theta$. The set $\dot{\mathcal{P}}$
of plausible distributions will be determined on the basis of $\left\{ L_{j}\left(\bullet\right)=f\left(t\left(x_{j}\right);\bullet\right):j=1,\dots,N\right\} $,
the set of likelihood functions. The plausible distributions are also
based on $\underline{\pi}_{0}=80\%$, an assumed lower bound on the
proportion of genes that are not differentially expressed. By Bayes's
theorem, the posterior odds of the $j$th null hypothesis is the product
of the prior odds, which is the least $\underline{\pi}_{0}/\left(1-\underline{\pi}_{0}\right)$,
and the Bayes factor, which must be at least $L_{j}\left(0\right)/\max_{\theta\ne0}L_{j}\left(\theta\right)$.
Thus, for gene $j$, a lower bound $\underline{\Omega}_{j}$ of the
posterior odds is the product of the last two quantities, and a lower
bound of the LFDR is $\underline{P}\left(\xi_{j}=0\right)=\underline{\Omega}_{j}/\left(1+\underline{\Omega}_{j}\right)$.
In the notation of Corollary \ref{cor:discrete}, $\underline{P}_{0,j}=\underline{P}\left(\xi_{j}=0\right)$
and, trivially, $\overline{P}_{0,j}=1$ for all $j=1,\dots,N$. Thus,
the plausible set specified by eq. \eqref{eq:discrete-plausible}
consists of the posterior distributions satisfying the lower bound
derived from the likelihood functions and $\underline{\pi}_{0}=80\%$.

The horizontal axis and straight line in Fig. \ref{fig:Combination-xprn}
represent $\underline{P}$, and the intermediate, highest, and lowest
dashed curves represent $\ddot{P}_{1}$, $\ddot{P}_{2}$, and $\ddot{P}_{3}$,
respectively. Since some of the q-values are less than the lower bound
$\left(\exists j:\ddot{P}_{3}\left(\xi_{j}=0\right)<\underline{P}\left(\xi_{j}=0\right)\right)$
but all of the other LFDR estimates satisfy the bound $\left(i=1,2;\forall j:\ddot{P}_{i}\left(\xi_{j}=0\right)\ge\underline{P}\left(\xi_{j}=0\right)\right)$,
the former are excluded when computing the combined estimates according
to eq. \eqref{eq:intersection}, in which $\dot{\mathcal{P}}\cap\ddot{\mathcal{P}}=\left\{ \ddot{P}_{1},\ddot{P}_{2}\right\} $.
Since there are only two distributions, each corresponds to an extreme
point, leading to $\extreme\left(\dot{\mathcal{P}}\cap\ddot{\mathcal{P}}\right)=\left\{ \ddot{P}_{1},\ddot{P}_{2}\right\} $
in eq. \eqref{eq:discrete-weight}. Thereby, the combined distribution
is numerically found to be the linear combination $P^{+}=w_{1}\ddot{P}_{1}+w_{2}\ddot{P}_{2}$
with $w_{1}=0.43$ and $w_{2}=0.57$. By implication, the game-optimal
LFDR estimate for the $j$th gene is $P^{+}\left(\xi_{j}=0\right)=w_{1}\ddot{P}_{1}\left(\xi_{j}=0\right)+w_{2}\ddot{P}_{2}\left(\xi_{j}=0\right)$.
Those combined estimates are plotted as the solid curve in Fig. \ref{fig:Combination-xprn}.
\section*{Acknowledgments}

I thank Xuemei Tang for sending me the fruit development microarray
data. This research was partially supported  by the Canada Foundation
for Innovation, by the Ministry of Research and Innovation of Ontario,
and by the Faculty of Medicine of the University of Ottawa. 

\begin{flushleft}
\bibliographystyle{elsarticle-harv}
\bibliography{refman}

\par\end{flushleft}

\newpage{}

\textbf{\LARGE ~}{\LARGE \par}

\textbf{\LARGE  }{\LARGE \par}

\section*{Appendix A: Remarks}

~
\begin{rem}
\label{rem:game-theory}(Sec. \ref{sec:Introduction}) Since formulating
the distribution combination problem in terms of a game is unconventional,
it is worth noting that game theory laid the foundations of the two
dominant schools of statistical decision theory. The maximum-expected-payoff
solution of a one-player game \citep[Ch. I]{MaxUtility1944} led to
axiomatic systems that support Bayesian statistics \citep[e.g.,][]{RefWorks:126}.
Likewise, the worst-case (minimax) solutions of certain two-player
zero-sum games \citep[Ch. III]{MaxUtility1944} led to frequentist
decision theory \citep{Wald1950}.
\end{rem}
~
\begin{rem}
\label{rem:minimax-redundancy}(Sec. \ref{sub:Information-theory})
The discrete-distribution version of Lemma \ref{lem:minimax-redundancy},
the main result of the {}``redundancy-capacity theorem,'' was presented
by R. G. Gallager in 1974 \citep[Editor's Note]{Ryabko1981} and published
by \citet{Ryabko197971} and \citet{Davisson1980}; cf. \citet{Gallager1979}.
\citet[Theorem 13.1.1]{CoverThomas1991}, \citet[§5.2.1]{RefWorks:374},
and \citet[Problem 8.1]{Csiszar2011} provide useful introductions.
\end{rem}
~
\begin{rem}
\label{rem:lexicographic}(Sec. \ref{sub:Distribution-combination-game})
Previous instances of lexicographically maximizing expected utility
with respect to an optimal probability distribution include the use
of the least informative prior \citep{Seidenfeld2004} and the use
of the posterior $P_{2}$ used to maximize $D\left(\dot{P}||P_{1}\rightsquigarrow P_{2}\right)$
in a two-player zero-sum game \citep{caution}. On lexicographic decision
making in other contexts, see \citet[§§5.7, 6.9]{Levi1986}, \citet{Levi1986a},
and \citet[§3.3.1]{KeeneyRaiffa1993b}. \citet[Ch. 4]{Ciesielski1997}
and \citet[Ch. 7]{Koshy2004} provide more formal set-theoretic expositions
of lexicographic ordering.
\end{rem}

\section*{Appendix B: Additional proofs}

\subsection*{Proof of Theorem \ref{thm:combination}}

For any $Q\in\mathcal{P}$, eqs. \eqref{eq:information-gain} and
\eqref{eq:utility-function} yield
\begin{eqnarray*}
\sup_{\left\langle P^{\prime},P^{\prime\prime}\right\rangle \in\dot{\mathcal{P}}\times\ddot{\mathcal{P}}}^{\preceq}U\left(P^{\prime},Q,P^{\prime\prime}\right) & = & \sup_{\left\langle P^{\prime},P^{\prime\prime}\right\rangle \in\dot{\mathcal{P}}\times\ddot{\mathcal{P}}}^{\preceq}\left\langle -D\left(P^{\prime}||P^{\prime\prime}\right),D\left(P^{\prime}||Q\rightsquigarrow P^{\prime\prime}\right)\right\rangle \\
 & = & \sup_{P^{\prime}\in\dot{\mathcal{P}}\cap\ddot{\mathcal{P}}}^{\preceq}\left\langle -D\left(P^{\prime}||P^{\prime}\right),D\left(P^{\prime}||Q\rightsquigarrow P^{\prime}\right)\right\rangle \\
 & = & \sup_{P^{\prime}\in\dot{\mathcal{P}}\cap\ddot{\mathcal{P}}}\left[D\left(P^{\prime}||Q\right)-D\left(P^{\prime}||P^{\prime}\right)\right]=\sup_{P^{\prime}\in\dot{\mathcal{P}}\cap\ddot{\mathcal{P}}}D\left(P^{\prime}||Q\right).
\end{eqnarray*}
Thus, by eqs. \eqref{eq:information-gain}, \eqref{eq:utility-function},
\eqref{eq:combined-distribution}, and \eqref{eq:opponent-moves},
\begin{eqnarray}
P^{+} & = & \arg\sup_{Q\in\mathcal{P}}^{\preceq}U\left(\arg\sup_{P^{\prime}\in\dot{\mathcal{P}}\cap\ddot{\mathcal{P}}}D\left(P^{\prime}||Q\right),\arg\sup_{P^{\prime}\in\dot{\mathcal{P}}\cap\ddot{\mathcal{P}}}D\left(P^{\prime}||Q\right),Q\right)\nonumber \\
 & = & \arg\sup_{Q\in\mathcal{P}}\left(-D\left(\arg\sup_{P^{\prime}\in\dot{\mathcal{P}}\cap\ddot{\mathcal{P}}}D\left(P^{\prime}||Q\right)||Q\right)\right)\nonumber \\
 & = & \arg\inf_{Q\in\mathcal{P}}\sup_{P^{\prime}\in\dot{\mathcal{P}}\cap\ddot{\mathcal{P}}}D\left(P^{\prime}||Q\right).\label{eq:minKL}
\end{eqnarray}
Hence, according to eq. \eqref{eq:centroid}, $P^{+}$ is $\widetilde{\dot{\mathcal{P}}\cap\ddot{\mathcal{P}}}$,
the centroid of $\dot{\mathcal{P}}\cap\ddot{\mathcal{P}}$. Since
$\dot{\mathcal{P}}\cap\ddot{\mathcal{P}}\ne\emptyset$ by assumption,
the conditions of Lemma \ref{lem:minimax-redundancy} are satisfied.

\subsection*{Proof of Lemma \ref{lem:extreme-set}}

As an immediate consequence of what \citet{Nakagawa1988} label {}``Theorem
(Csiszár)'' and {}``Theorem 1,'' 
\[
\min_{P^{\prime\prime}\in\mathcal{P}}\max_{P^{\prime}\in\mathcal{P}^{\star}}D\left(P^{\prime}||P^{\prime\prime}\right)=C.
\]
By definition, the centroid is the solution of that minimax problem
\eqref{eq:centroid}.

\subsection*{Proof of Theorem \ref{thm:extreme}}

Lemma \ref{lem:minimax-redundancy} implies that $\widetilde{\extreme\left(\dot{\mathcal{P}}\cap\ddot{\mathcal{P}}\right)}$,
the centroid of $\extreme\left(\dot{\mathcal{P}}\cap\ddot{\mathcal{P}}\right)$,
is $P^{W_{\extreme\left(\dot{\mathcal{P}}\cap\ddot{\mathcal{P}}\right)}}$.
Since, according to the definition of an extreme point and the definition
of a centroid \eqref{eq:centroid}, it is not possible that there
exist a $P^{\prime}\in\extreme\left(\dot{\mathcal{P}}\cap\ddot{\mathcal{P}}\right)$
and a $P^{\prime\prime}\in\extreme\left(\dot{\mathcal{P}}\cap\ddot{\mathcal{P}}\right)$
such that $D\left(P^{\prime}||\widetilde{\extreme\left(\dot{\mathcal{P}}\cap\ddot{\mathcal{P}}\right)}\right)<D\left(P^{\prime\prime}||\widetilde{\extreme\left(\dot{\mathcal{P}}\cap\ddot{\mathcal{P}}\right)}\right)$,
it follows that 
\[
D\left(P^{\star}||P^{W_{\extreme\left(\dot{\mathcal{P}}\cap\ddot{\mathcal{P}}\right)}}\right)=\max_{P^{\prime}\in\extreme\left(\dot{\mathcal{P}}\cap\ddot{\mathcal{P}}\right)}D\left(P^{\prime}||P^{W_{\extreme\left(\dot{\mathcal{P}}\cap\ddot{\mathcal{P}}\right)}}\right)
\]
for all $P^{\star}\in\extreme\left(\dot{\mathcal{P}}\cap\ddot{\mathcal{P}}\right)$.
According to Lemma \ref{lem:extreme-set}, $P^{W_{\extreme\left(\dot{\mathcal{P}}\cap\ddot{\mathcal{P}}\right)}}$
is $\widetilde{\dot{\mathcal{P}}\cap\ddot{\mathcal{P}}}$, the centroid
of $\dot{\mathcal{P}}\cap\ddot{\mathcal{P}}$. That centroid is $P^{+}$
by Theorem \ref{thm:combination}.

\subsection*{}

\newpage{}
\begin{figure}
\includegraphics{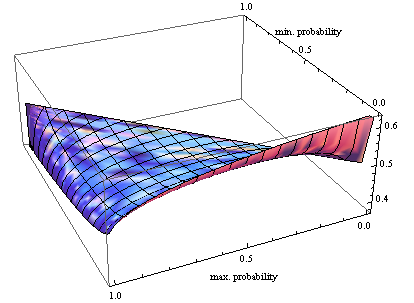}

\caption{\label{fig:weight}Optimal weight $w^{+}$ versus $\underline{\ddot{P}}\left(\left\{ 0\right\} \right)$
and $\overline{\ddot{P}}\left(\left\{ 0\right\} \right)$, the lowest
and highest of the plausible probabilities to be combined \eqref{eq:probability-weight},
labeled here as {}``min. probability'' and {}``max. probability,''
respectively. The combination probability is $P^{+}\left(0\right)=w^{+}\underline{\ddot{P}}\left(0\right)+\left(1-w^{+}\right)\overline{\ddot{P}}\left(0\right)$
according to Corollary \ref{cor:combining-probabilities}.}
\end{figure}
\begin{figure}
\includegraphics[scale=0.5]{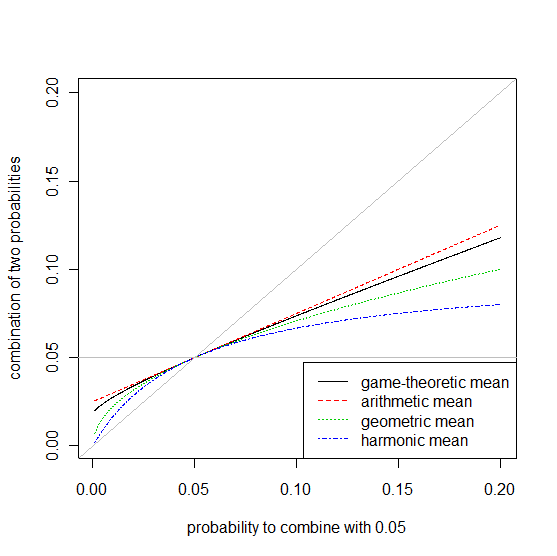}

\caption{\label{fig:means}Three equal-weight averages of probabilities and
the game-theoretic combination of two probabilities based on Corollary
\ref{cor:combining-probabilities}. The two probabilities that are
combined are drawn in solid gray.}
\end{figure}
\begin{figure}
\includegraphics[scale=0.5]{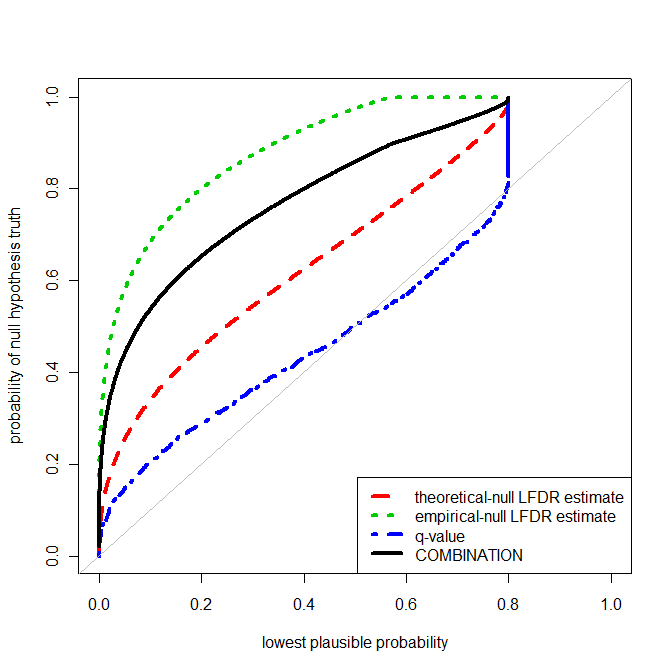}

\caption{\label{fig:Combination-xprn}Combination of estimates of local false
discovery rates, which are empirical Bayes posterior probabilities
that the null hypotheses of equivalent gene expression are true.}
\end{figure}

\end{document}